\documentclass[10pt,journal,letterpaper]{IEEEtran}
\usepackage[T1]{fontenc}
\usepackage{newtxtext,amsmath,newtxmath}
\usepackage{booktabs,array,cite,url,balance}
\usepackage{tikz,pgfplots}
\pgfplotsset{compat=1.18}
\definecolor{paperblack}{HTML}{333333}
\definecolor{paperblue}{HTML}{0072B2}
\definecolor{paperred}{HTML}{D55E00}
\definecolor{papergreen}{HTML}{009E73}
\pgfplotsset{paperaxis/.style={tick label style={font=\footnotesize},label style={font=\footnotesize},title style={font=\small},grid=major,grid style={black!12},axis line style={black!60},legend cell align=left,legend style={draw=none,fill=white},scaled ticks=false}}

\newcommand{\col}{\operatorname{col}}

\newcommand{\clip}[1]{\left[#1\right]_0^1}
\newtheorem{theorem}{Theorem}
\newtheorem{corollary}{Corollary}
\newenvironment{proof}{\begin{IEEEproof}}{\end{IEEEproof}}
\begin{document}
\title{From Echoes to Evidence: Cooperative Fusion for Integrated Sensing and Communication}
\author{Xingqin~Lin
}

\markboth{}{}
\maketitle
\begin{abstract}
Cooperative integrated sensing and communication (ISAC) can support shared situational awareness for unmanned aerial vehicles (UAVs), road vehicles, and industrial mobile robots such as automated guided vehicles (AGVs). Yet multipath can make nodes agree on a virtual target while useful weak echoes remain unexploited. This paper develops cooperative fusion that separates association confirmation, state refinement, and residual recovery. A cellular reference signal model accommodates arbitrary node counts, three-dimensional motion, monostatic and bistatic links, nuisance parameters, and correlated errors. We characterize report distinguishability under bounded bias and shared-reflector ambiguities that survive additional nodes. Minimax estimation yields directional weak-report refinement that preserves confirmed associations and limits bias-sensitive innovations; acquisition accounts for request and upload costs. Controlled experiments accompany UAV evaluations under a Third Generation Partnership Project (3GPP) stochastic channel model and a ray tracing environment representing San Francisco. In the latter environment, matched-budget fusion raises detection from observable site selection's 88.75\% to 92.46\% and reduces the false-output fraction from 11.06\% to 0.40\%. Against the strongest full-band fixed site, the 90th-percentile position error on commonly detected targets falls from 8.54 to 3.50 m. Association controls, local rejection ablations, and refinement studies distinguish the sources of these gains.
\end{abstract}
\begin{IEEEkeywords}
Integrated sensing and communication, cellular reference signals, cooperative sensing, multipath, biased estimation, measurement fusion.
\end{IEEEkeywords}
\section{Introduction}

Integrated sensing and communication (ISAC) is a key direction for sixth-generation (6G) networks and beyond, enabling cellular infrastructure to carry data and observe its environment~\cite{chen_5ga6g,lin_bridge6g}. Applications include unmanned aerial vehicle (UAV) operations, vehicle awareness, and automated guided vehicle (AGV) coordination. Known reference symbols support delay, Doppler, and angle measurements from complementary node views. Distributed radar theory and cooperative ISAC establish spatial-diversity benefits~\cite{fishler_diversity,godrich_localization,meng_scaling}, which depend on visibility, scattering strength, and velocity observability.

Multipath creates an evidence problem: reflected reports can agree precisely on a virtual target, while a weak direct echo rejected locally can resolve another node's uncertainty. Recovery candidates may still bias confirmed states. We link association confirmation, refinement at acceptable bias risk, and additional acquisition through propagation ambiguity, report distinguishability, correlated estimation risk, and acquisition cost.

Waveform and receiver studies supply the first foundation. Orthogonal
frequency-division multiplexing (OFDM) supports communication and radar
processing \cite{sturm_ofdm}. Positioning reference signals (PRS) are one
cellular realization, with sensing processing and estimation bounds established
by Wei \emph{et al.} \cite{wei_prs}. Recent studies characterize random-data
sidelobes, shape constellations, and optimize range--Doppler sidelobes
\cite{keskin_monostatic,du_constellation,li_sidelobe}; tensor methods exploit
channel and target structure \cite{zhang_tensor}. Bistatic implementations
address irregular reference sampling, changing precoders
\cite{nataraja_bistatic}, and synchronization \cite{brunner_bistatic}.
Spatial/temporal coherence tests path consistency; departure--arrival
consistency discriminates multipath ghosts \cite{zheng_ghost}. Consistent
reflections can nevertheless produce incorrect target hypotheses.
Statistical models, ray tracing, and radio digital-twin modeling describe
complementary propagation structures \cite{zhang_channel,bazzi_rt,zhang_rf3dgs}.
These representations also connect sensing, localization, and tracking measurements to physical-state models for wireless digital twins~\cite{lin_digital_twin}.

Cooperative processing forms the second foundation. Detection combines clutter-aware observations and resource allocation \cite{behdad_detection,perovic_rate}; hybrid fusion combines monostatic and bistatic returns \cite{chowdary_hybrid}. Distributed multiple-input multiple-output (MIMO) OFDM systems can operate without carrier-phase synchronization \cite{han_noncoherent}; scalable receivers aggregate clutter-whitened detection energies \cite{liesegang_scalable}. Parameter fusion requires correspondence: Shi and Liu jointly identify line-of-sight (LOS) observations and associate targets \cite{shi_los}. Tang \emph{et al.}'s closely related UAV study combines unknown-target count estimation, tensor processing, false-removing association, three-dimensional (3D) position/velocity estimation, and dual polarization \cite{tang_uav}. Range--Doppler spatial association establishes strict groups before relaxing residual search \cite{saga}. We examine retained weak-report influence on confirmed estimates and recovery of additional objects.

Geometry and nominal precision do not settle this decision. Equivalent information characterizes localization geometry and reference nuisance \cite{shen_efi1,shen_efi2}; sensor selection optimizes covariance criteria, including correlated noise \cite{joshi_selection,chepuri_selection,liu_correlated}. Mean-square error need not inherit a log-determinant surrogate's supermodularity \cite{chamon_selection}; random uncertainty does not describe model offsets \cite{eldar_biased,fortunati_misspec}. Established minimax and weighted-risk methods address explicit uncertainty classes \cite{eldar_robust_mse,eldar_universal_weighted}; the clipping optimizer is also established \cite{jacobsson_squared_infinity}. We specialize these methods to correlated weak-report innovations, preserve confirmed associations, and regulate influence with a declared bias radius. We also identify reflected-path ambiguities that additional nodes or signal energy cannot resolve from these reports.

Reporting couples estimation with resource use. Censoring supports selective and feedback-triggered transmissions \cite{appad_censor,jiang_censor}; networked sensing optimizes compression, transmission, and beamforming \cite{zhu_compression,zhang_fronthaul}. Accessing retained candidates requires requests, so fewer uploads can increase total traffic. Our framework links observability and biased refinement with exclusive association and candidate-specific access. Strong reports confirm associations, additional views refine confirmed states under a declared risk model, and unused anchors recover objects (Fig.~\ref{fig:architecture}). The contributions are:
\begin{itemize}
\item A general observation/report model connects monostatic/bistatic geometry, nuisance references, diagnostics, and 3D observability, characterizing bias-aware distinguishability and a reflected-path equivalence class unresolved by report consistency.
\item A weighted innovation-risk formulation separates random error from unknown bias, accounts for correlation, and derives directional weak-report refinement through established minimax and proximal methods.
\item An admissible-group/access model separates confirmation, refinement, and recovery, specifying candidate requests, exact residual-search equivalence conditions, and bidirectional acquisition cost.
\item Controlled geometry, bias, correlation, and signaling experiments precede large stochastic and mapped-city evaluations. Shared-observation controls compare local rejection, association, and refinement; set-error metrics complement detection and common-target position/velocity accuracy \cite{ospa,gospa}.
\end{itemize}

The rest of this paper is organized as follows. Section~\ref{sec:model} establishes signal and report models; Section~\ref{sec:identifiability} analyzes identifiability and ambiguity; Section~\ref{sec:minimax} develops bias-aware refinement. Section~\ref{sec:algorithms} presents receiver processing, association, and reporting; Section~\ref{sec:evaluation} describes channel construction and evaluation. Section~VII concludes.

\textbf{\textit{Notation:}} Bold lowercase/uppercase letters denote vectors/matrices.
Superscripts $\mathsf T$, $\mathsf H$, and $\dagger$ denote transpose,
Hermitian transpose, and Moore--Penrose pseudoinverse; $\mathbf I$ is the
identity. Operators $\operatorname{tr}$, $\operatorname{rank}$, and
$\operatorname{diag}$ denote trace, rank, and diagonal construction;
$\mathbb E$ and $\operatorname{cov}$ denote expectation and covariance.
The Euclidean vector and spectral matrix norms are $\|\cdot\|$ and
$\|\cdot\|_2$. For vector $\mathbf e$ and weight
$\mathbf W\succeq0$,
$\|\mathbf e\|_{\mathbf W}^2=\mathbf e^{\mathsf T}\mathbf W\mathbf e$.
Symbols $\succeq0$/$\succ0$ mean positive semidefinite/definite;
matrix square roots are symmetric nonnegative roots.
The real/complex fields are $\mathbb R$/$\mathbb C$, and
$\mathrm j^2=-1$.

\begin{figure*}[t]\centering
\begin{tikzpicture}[font=\footnotesize,>=stealth,box/.style={draw=paperblue,rounded corners,align=center,minimum height=1.05cm,text width=3.05cm},weak/.style={draw=paperred,dashed,rounded corners,align=center,text width=3.05cm,minimum height=.82cm}]
\node[box] (p) at (0,0) {Cellular reference signals\\surveyed link poses};
\node[box] (r) at (3.55,0) {Range--Doppler peaks\\angle + local checks};
\node[box] (s) at (7.1,0) {Strong reports\\confirm associations};
\node[box,draw=papergreen] (f) at (10.65,0) {Preserve associations\\refine position / velocity};
\draw[->] (p)--(r);\draw[->] (r)--(s);\draw[->] (s)--(f);
\node[weak] (w) at (3.55,-1.75) {Eligible weak reports\\retained locally};
\node[weak] (a) at (7.1,-1.75) {Unused strong anchors\\query compatible weak};
\node[box,draw=papergreen,minimum height=.82cm] (b) at (10.65,-1.75) {Fit residual groups\\append recovered objects};
\draw[->,paperred] (r)--(w);\draw[->,paperred] (s)--(a);
\draw[<->,paperred] (w)--(a);\draw[->] (a)--(b);
\draw[->,paperred,dashed] (w.north east)--(5.3,-1)--(10.65,-1)--(f.south);
\node[font=\footnotesize,fill=white,inner sep=1pt] at (7.8,-1.14) {Bias-aware weak refinement};
\node[align=center,font=\footnotesize] at (0,-1.75) {$K$ mono/bistatic nodes\\one fusion function};
\end{tikzpicture}
\caption{Three evidence decisions: strong reports confirm associations; unconsumed weak views refine confirmed states under a declared bias risk; unused anchors recover additional objects. Confirmation identifiers remain fixed during refinement. Infinite bias radius locks the states. Truth and traced path labels enter scoring only.}
\label{fig:architecture}\end{figure*}
\section{General Signal and Measurement Model}
\label{sec:model}
\subsection{Nodes, Motion, and Reference Signals}
Consider $K$ surveyed nodes with monostatic/bistatic links and an unknown
finite target population. For an association hypothesis, target position
$\mathbf p_0\in\mathbb R^3$ at reference epoch $t_0$ and constant velocity
$\mathbf v\in\mathbb R^3$ form the state
$\mathbf x=(\mathbf p_0^{\mathsf T},\mathbf v^{\mathsf T})^{\mathsf T}\in\mathbb R^6$.
During estimation, $\mathbf p(t)=\mathbf p_0+(t-t_0)\mathbf v$ and
$\|\mathbf v\|\le v_{\max}$ for specified maximum speed $v_{\max}$;
all three velocity components are unrestricted. Link $a$ pairs
a transmitter (TX) and receiver (RX) with surveyed positions
$\mathbf b_{T,a}(t),\mathbf b_{R,a}(t)$ and velocities
$\mathbf v_{T,a},\mathbf v_{R,a}$; subscripts $T,R$ identify endpoints.
Array orientations are known; user equipment (UE) may receive.
The estimator receives neither target count nor path classification.

Known cellular reference symbols $X_{a,b}[k,n]$ sample frequencies $f_k$
and symbol times $t_n$, indexed by $k,n$.
Link $a$ has $B_a$ resources; resource $b\in\{1,\ldots,B_a\}$ uses a known
spatial weight vector $\mathbf w_{a,b}$ and transmit polarization.
Receive element/polarization index $p$ observes
\begin{equation}
\begin{aligned}
Y_{a,b,p}[k,n]&=X_{a,b}[k,n]H_{a,b,p}[k,n]+N_{a,b,p}[k,n],\\
H_{a,b,p}[k,n]&=H^{\rm bg}_{a,b,p}[k]
+\sum_{j,q}\alpha_{a,b,p,j,q}\\
&\quad\cdot
e^{-\mathrm j2\pi f_k\tau_{a,j,q}}
e^{\mathrm j2\pi f_{D,a,j,q}t_n}.
\end{aligned}\label{eq:signal}
\end{equation}
Here $Y,H,N$ are received sample, complex channel response, and thermal-noise/
post-cancellation interference sample. Stationary background $H^{\rm bg}$
excludes target interactions. Each target $j$/interacting path $q$ has
complex amplitude $\alpha_{a,b,p,j,q}$, delay $\tau_{a,j,q}$, and Doppler
$f_{D,a,j,q}$. Amplitudes include illumination, arrays, scattering, and
polarization. Links require timing/frequency registration for geometric
reports, without inter-node carrier-phase coherence.
Equation~\eqref{eq:signal} uses a local delay/Doppler approximation within each observation window.

For an isolated return of known shape with one unknown common complex amplitude
and independent circular complex Gaussian noise of equal variance, the
$2\times2$ efficient delay/Doppler Fisher information matrix
$\mathbf J_{\tau,f_D}$ is
\begin{equation}
\mathbf J_{\tau,f_D}=8\pi^2\gamma
\begin{bmatrix}
\operatorname{var}_{\omega}(f)&-\operatorname{cov}_{\omega}(f,t)\\
-\operatorname{cov}_{\omega}(f,t)&\operatorname{var}_{\omega}(t)
\end{bmatrix},\label{eq:waveform_information}
\end{equation}
where $\gamma$ is received signal energy divided by complex-noise
variance, $\omega$ are normalized sample-energy weights, and
$\operatorname{var}_{\omega},\operatorname{cov}_{\omega}$ are variance
and covariance under those weights. Removing the unknown amplitude centers
derivatives $(-\mathrm j2\pi f,\mathrm j2\pi t)$ and yields
\eqref{eq:waveform_information}. Bandwidth, timing, and their coupling determine precision;
sample count alone does not~\cite{sturm_ofdm,li_sidelobe,du_constellation}.
This motivates report weights, not direct-path classification.

\subsection{Monostatic and Bistatic Report Geometry}
At report time $t_a$, let $\Delta_a=t_a-t_0$ be the reference-epoch offset.
For endpoint $s\in\{T,R\}$, define its target displacement
$\mathbf d_{s,a}=\mathbf p_0+\Delta_a\mathbf v-\mathbf b_{s,a}(t_a)$,
distance $R_{s,a}=\|\mathbf d_{s,a}\|>0$ for noncoincident endpoints, and outward unit direction
$\mathbf u_{s,a}=\mathbf d_{s,a}/R_{s,a}$.
The direct-return prediction
$\mathbf h_a(\mathbf x)=(\ell_a,\alpha_a,\beta_a,\chi_a)^{\mathsf T}$
contains path length $\ell_a$, receive azimuth $\alpha_a$, receive elevation
$\beta_a$, and path-length rate $\chi_a$:
\begin{equation}
\mathbf h_a(\mathbf x)=
\begin{bmatrix}
R_{T,a}+R_{R,a}\\
\operatorname{atan2}(d_{R,a,y},d_{R,a,x})\\
\operatorname{atan2}(d_{R,a,z},\sqrt{d_{R,a,x}^2+d_{R,a,y}^2})\\
\mathbf u_{T,a}^{\mathsf T}(\mathbf v-\mathbf v_{T,a})
+\mathbf u_{R,a}^{\mathsf T}(\mathbf v-\mathbf v_{R,a})
\end{bmatrix}.\label{eq:geometry}
\end{equation}
Cartesian subscripts identify components of $\mathbf d_{R,a}$;
$\operatorname{atan2}(y,x)$ is the four-quadrant angle.
Path length $\ell=c_0\tau$ and rate $\chi=-\lambda f_D$ use propagation
speed $c_0$ and carrier wavelength $\lambda$; positive rate means recession.
Fixed monostatic nodes report range $r=\ell/2$ and radial speed $\nu=\chi/2$,
with uncertainty transformed accordingly. Angles use the common surveyed frame.

For a proposed target group with $m$ scalar coordinates, stack measured
reports $\mathbf z\in\mathbb R^m$ and predictions
$\mathbf h(\mathbf x)\in\mathbb R^m$ as
\begin{equation}
\mathbf z=\mathbf h(\mathbf x)+\mathbf F\boldsymbol\eta
+\mathbf b+\boldsymbol\epsilon,
\quad \mathbb E\boldsymbol\epsilon=0,
\quad\operatorname{cov}(\boldsymbol\epsilon)=\mathbf R\succ0.
\label{eq:report_model}
\end{equation}
Unknown calibration/synchronization parameters $\boldsymbol\eta$ have known
sensitivity $\mathbf F$; $\mathbf b$ is deterministic path/model bias and
$\boldsymbol\epsilon$ random error with covariance $\mathbf R$.
Cross-report blocks permit shared-receiver, waveform, and interference
correlations; Gaussian angle errors use a common local unwrapped chart.
Same-clock monostatic measurements cancel absolute clock
offsets; bistatic relative clock/frequency errors need not cancel
~\cite{brunner_bistatic,nataraja_bistatic}. Known unbiased quantization contributes
to $\mathbf R$; bounded quantization and nonlinear remainder to $\mathbf b$.

The report Jacobian $\mathbf H_a=\partial\mathbf h_a/\partial\mathbf x$
is explicit, and $\mathbf H$ stacks the link Jacobians.
Suppress index $a$ and let
$\mathbf w=\mathbf u_T+\mathbf u_R$,
$\boldsymbol\Pi_s=\mathbf I-\mathbf u_s\mathbf u_s^{\mathsf T}$, and
$\mathbf g=\boldsymbol\Pi_T(\mathbf v-\mathbf v_T)/R_T+
\boldsymbol\Pi_R(\mathbf v-\mathbf v_R)/R_R$.
With horizontal receive distance $\varrho=\sqrt{d_{R,x}^2+d_{R,y}^2}$,
\begin{align}
\mathbf a_\alpha&=(-d_{R,y}/\varrho^2,d_{R,x}/\varrho^2,0)^{\mathsf T},\nonumber\\
\mathbf a_\beta&=(-d_{R,x}d_{R,z}/(R_R^2\varrho),
-d_{R,y}d_{R,z}/(R_R^2\varrho),\varrho/R_R^2)^{\mathsf T},\nonumber\\
\mathbf H_a&=\begin{bmatrix}
\mathbf D_a&\Delta_a\mathbf D_a\\
\mathbf g^{\mathsf T}&\mathbf w^{\mathsf T}+\Delta_a\mathbf g^{\mathsf T}
\end{bmatrix},\quad
\mathbf D_a=\begin{bmatrix}\mathbf w^{\mathsf T}\\
\mathbf a_\alpha^{\mathsf T}\\\mathbf a_\beta^{\mathsf T}\end{bmatrix}.
\label{eq:jacobian}
\end{align}
Here $\boldsymbol\Pi_s$ projects perpendicular to $\mathbf u_s$;
$\mathbf a_\alpha,\mathbf a_\beta$ and $\mathbf g$ are angle and path-rate
position gradients, while $\mathbf D_a$ stacks path-length/angle gradients.
At zenith a nonsingular local direction coordinate replaces azimuth.
\section{Cooperative Identifiability and Propagation Ambiguity}
\label{sec:identifiability}
\subsection{Geometry, Clock Nuisance, and Bias}
For fixed path/bias and Gaussian errors with state-independent covariance, whiten
$\widetilde{\mathbf H}=\mathbf R^{-1/2}\mathbf H$ and
$\widetilde{\mathbf F}=\mathbf R^{-1/2}\mathbf F$.
Nuisance elimination gives equivalent Fisher information
\begin{equation}
\mathbf J_{\rm eff}=\widetilde{\mathbf H}^{\mathsf T}
(\mathbf I-\widetilde{\mathbf F}\widetilde{\mathbf F}^{\dagger})
\widetilde{\mathbf H}.\label{eq:efim}
\end{equation}
Full local observability requires rank-six $\mathbf J_{\rm eff}$ after this
established projection~\cite{shen_efi1,shen_efi2}.
For simultaneous fixed monostatic links with registered
references, range/angle rows fix
the three position perturbations. Writing
$\mathbf u_a=\mathbf u_{T,a}=\mathbf u_{R,a}$ for their common unit
direction gives
\begin{equation}
\operatorname{rank}\mathbf H=3+\operatorname{rank}\mathbf U,
\qquad \mathbf U=\operatorname{stack}_a(\mathbf u_a^{\mathsf T}).
\label{eq:rank}
\end{equation}
Here $\operatorname{stack}_a$ vertically stacks link rows; null-space
velocity perturbations $\delta\mathbf v$ satisfy $\mathbf U\delta\mathbf v=0$.
Two views cannot identify unrestricted simultaneous three-dimensional
velocity; three spanning views can. Multiple epochs add
$\Delta_a\mathbf D_a$ sensitivity. Unresolved independent path-rate offsets
can absorb all path-rate equations, irrespective of node count, so geometry
and reference registration must be reported together.

Bias also affects path distinguishability. Let
$\mathbf B\succeq0$ describe a bias ellipsoid of radius $\rho\ge0$ and
$\mathbf V$ have orthonormal columns spanning the orthogonal complement
of the nuisance column space $\operatorname{col}(\widetilde{\mathbf F})$.
For competing states $\mathbf x_j$, $j\in\{0,1\}$, define projected means
$\boldsymbol\mu_j=\mathbf V^{\mathsf T}\mathbf R^{-1/2}\mathbf h(\mathbf x_j)$
and the projected bias map
$\mathbf E=\mathbf V^{\mathsf T}\mathbf R^{-1/2}\mathbf B^{1/2}$.
Convex Gaussian testing~\cite{guigues_testing} yields the following criterion.

\begin{theorem}[Bias-aware report distinguishability]
\label{thm:distinguish}
For two fixed states under \eqref{eq:report_model}, common Gaussian covariance,
unrestricted nuisance, and independently adversarial biases
$\mathbf b_j=\mathbf B^{1/2}\boldsymbol\xi_j$, $\mathbf B\succeq0$,
$\|\boldsymbol\xi_j\|\le\rho$, the exact minimax sum of the two worst
decision-error probabilities is $2\Phi(-d_*/2)$, where $\Phi$ is the standard
normal distribution function. The minimum whitened separation is
\begin{equation}
d_* =\min_{\|\boldsymbol\delta\|\le2\rho}
\|\boldsymbol\mu_1-\boldsymbol\mu_0+\mathbf E\boldsymbol\delta\|.
\label{eq:separation}
\end{equation}
with $\boldsymbol\delta=\boldsymbol\xi_1-\boldsymbol\xi_0$ the difference
between the two admissible bias coordinates.
\end{theorem}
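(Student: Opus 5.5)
We first remove the nuisance parameter, then reduce the problem to testing between two convex sets of Gaussian means, and finally solve that problem in closed form.

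\emph{Reduction by invariance.} Whiten the reports, $\tilde{\mathbf z}=\mathbf R^{-1/2}\mathbf z$, so that the noise becomes standard Gaussian. Because $\boldsymbol\eta$ is unrestricted, every component of $\tilde{\mathbf z}$ along $\operatorname{col}(\widetilde{\mathbf F})$ can be shifted arbitrarily by the nuisance. The problem is invariant under translations in that subspace, so we work with the projection $\mathbf y=\mathbf V^{\mathsf T}\tilde{\mathbf z}$.

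We argue that this loses nothing for the minimax risk, and we argue both directions:
\begin{itemize}
\item The complementary component $\widetilde{\mathbf F}\widetilde{\mathbf F}^{\dagger}\tilde{\mathbf z}$ is independent of $\mathbf y$, since the noise is white.
\item Under either hypothesis, its mean ranges over the whole subspace as $\boldsymbol\eta$ varies. The adversary can therefore place the same mean there under both hypotheses, which makes that component useless.
\item Conversely, any test of $\mathbf y$ is a valid test of $\mathbf z$.
\end{itemize}
Under $\mathbf x_j$ with bias $\mathbf B^{1/2}\boldsymbol\xi_j$, we have $\mathbf y\sim\mathcal N(\boldsymbol\mu_j+\mathbf E\boldsymbol\xi_j,\mathbf I)$. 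Hypothesis $j$ is thus the composite Gaussian family whose means fill the convex compact set
\[
\mathcal M_j=\{\boldsymbol\mu_j+\mathbf E\boldsymbol\xi:\|\boldsymbol\xi\|\le\rho\}.
\]

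\emph{Convex Gaussian testing.} We invoke the result of~\cite{guigues_testing}. For two convex compact sets of means of $\mathcal N(\cdot,\mathbf I)$, the minimax value of the sum of the worst-case type-I and type-II error probabilities is $2\Phi(-\tfrac12\min_{\mathbf m_0\in\mathcal M_0,\mathbf m_1\in\mathcal M_1}\|\mathbf m_1-\mathbf m_0\|)$.

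It is attained by the following saddle point:
\begin{itemize}
\item Let $(\mathbf m_0^*,\mathbf m_1^*)$ be a closest pair of points in the two sets.
\item The test is the affine detector $\operatorname{sign}\big((\mathbf m_1^*-\mathbf m_0^*)^{\mathsf T}(\mathbf y-\tfrac12(\mathbf m_0^*+\mathbf m_1^*))\big)$.
\item The least-favorable pair of distributions consists of the two Gaussians at the closest points.
\end{itemize}
We would sketch this self-containedly. For the lower bound, restricting nature to the two Gaussians at the closest points reduces the problem to simple-versus-simple testing. There the Neyman--Pearson/likelihood-ratio test with equal priors gives error sum $2\Phi(-d/2)$, and no test can do better.

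For the upper bound, by the optimality condition for the projection between convex sets, $\mathcal M_1$ lies in the half-space $\{\mathbf m:(\mathbf m_1^*-\mathbf m_0^*)^{\mathsf T}(\mathbf m-\mathbf m_1^*)\ge0\}$, and symmetrically for $\mathcal M_0$. Hence every mean in either set lies at signed distance at least $d/2$ on the correct side of the detector's hyperplane. Each worst-case error is therefore at most $\Phi(-d/2)$.

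\emph{Computing the separation.} The closest-pair distance is
\[
\min_{\|\boldsymbol\xi_0\|,\|\boldsymbol\xi_1\|\le\rho}\|\boldsymbol\mu_1-\boldsymbol\mu_0+\mathbf E(\boldsymbol\xi_1-\boldsymbol\xi_0)\|.
\]
The objective depends on the bias coordinates only through $\boldsymbol\delta=\boldsymbol\xi_1-\boldsymbol\xi_0$. The set of achievable differences is the Minkowski sum of the ball of radius $\rho$ with its reflection, which is exactly the ball $\|\boldsymbol\delta\|\le2\rho$; take $\boldsymbol\xi_1=\boldsymbol\delta/2=-\boldsymbol\xi_0$ to see that every such $\boldsymbol\delta$ is attained. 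This yields \eqref{eq:separation}.

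\emph{Main obstacle.} The delicate step is the nuisance reduction, namely justifying that the projected problem has the same minimax value as the original one. We handle it with the adversary-matching argument above. Specifically, for any test on $\mathbf z$, the adversary chooses $\boldsymbol\eta_0,\boldsymbol\eta_1$ so that both hypotheses have identical means in $\operatorname{col}(\widetilde{\mathbf F})$. The test's error sum is then at least that of a randomized test of $\mathbf y$, which in turn is at least the projected minimax value. A minor point is degeneracy when $d_*=0$, where both sides equal $1$, consistent with $2\Phi(0)=1$.
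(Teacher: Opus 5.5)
Your proposal is correct and follows the paper's own argument. Both whiten the reports and project out $\operatorname{col}(\widetilde{\mathbf F})$, take closest points of the convex mean sets $\boldsymbol\mu_j+\mathbf E\mathbb B_\rho$, threshold at the midpoint along their separating normal for the upper bound, use the simple Gaussian hypotheses at the closest points for the lower bound, handle intersecting sets with a constant decision, and let the nuisance align the eliminated coordinates. You add detail the paper omits, notably the Minkowski-sum identity that yields the $2\rho$ ball and the independence argument for the nuisance component; there, either pin the common nuisance-direction mean to a fixed value or rely only on the fixed least-favorable pair, so that the induced randomized test of $\mathbf y$ does not depend on the bias choice.
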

\begin{proof}
Whitening and nuisance projection leave identity-covariance noise and convex
mean sets $\mathcal C_j=\boldsymbol\mu_j+\mathbf E\mathbb B_\rho$, where
$\mathbb B_\rho=\{\boldsymbol\xi:\|\boldsymbol\xi\|\le\rho\}$.
If they intersect, the common mean gives a lower bound of one on the error
sum, attained by a constant decision. Otherwise let $\mathbf q_0,\mathbf q_1$
be closest points, distance $d_*$. Their separating normal
$(\mathbf q_1-\mathbf q_0)/d_*$ supports both sets. Thresholding along this
normal at the midpoint gives worst errors $\Phi(-d_*/2)$ for each set.
The Gaussian simple hypotheses at the closest points attain the same lower
bound. Unrestricted nuisance cannot improve this minimax test because it
can align the eliminated mean coordinates. Thus both bounds coincide.
\end{proof}
Theorem~\ref{thm:distinguish} separates precision from path distinguishability.
For fixed nuisance and bias classes, overlapping projected admissible mean
sets admit the same report distribution under both states. Reducing random
error leaves the minimax error sum at one; disambiguation requires additional
information or justified narrowing of these classes.

\subsection{Spatial Signatures and a Persistent Image Ambiguity}
After receive steering and Doppler compensation at each resource center,
an isolated path with common transmit polarization has resource/polarization response
$\mathbf O_{\rm sig}=\mathbf s(\mathbf u_T)\mathbf c^{\mathsf T}$.
The signature $\mathbf s(\mathbf u)$ stacks resource responses
$s_b(\mathbf u)=\mathbf a_T(\mathbf u)^{\mathsf H}\mathbf w_b$, where
$\mathbf a_T$ is the transmit-array steering vector and $\mathbf w_b$ the
resource weight with link index suppressed. Vector $\mathbf c$ contains
unknown complex receive-polarization amplitudes, and $\mathbf o_p$ is column
$p$ of $\mathbf O_{\rm sig}$. At measured unit arrival direction
$\widehat{\mathbf u}_R$, a monostatic direct-return consistency score is
\begin{equation}
C_{\rm TX}=\frac{\sum_p|\mathbf s^{\mathsf H}\mathbf o_p|^2}
{\|\mathbf s\|^2\sum_p\|\mathbf o_p\|^2},
\quad\mathbf s=\mathbf s(\widehat{\mathbf u}_R).
\label{eq:tx_signature}
\end{equation}
This normalized score requires nonzero signature and observed energies.
With independent circular complex Gaussian noise of variance $\sigma^2$ on
each resource/polarization sample, profiling out $\mathbf c$ leaves the
Fisher information $\mathbf J_\theta$ for departure azimuth/elevation
$\boldsymbol\theta$:
\begin{equation}
\mathbf J_\theta=\frac{2\|\mathbf c\|^2}{\sigma^2}
\operatorname{Re}\{\mathbf S_\theta^{\mathsf H}
(\mathbf I-\mathbf s\mathbf s^{\mathsf H}/\|\mathbf s\|^2)
\mathbf S_\theta\},\quad
\mathbf S_\theta=\partial\mathbf s/\partial\boldsymbol\theta.
\label{eq:signature_information}
\end{equation}
Here $\operatorname{Re}$ denotes real part. Eliminating each polarization's
complex amplitude yields the projection. One resource gives zero profiled
departure information; multiple resources help only through different
normalized signatures. Collinear signatures remain indistinguishable;
separation need not increase with angular difference
(Fig.~\ref{fig:path_mechanism}). Multiple transmit polarizations require a
matrix-valued signature and its corresponding nuisance subspace. Bistatic
consistency uses predicted departure $\mathbf u_T$, which generally differs
from arrival $\mathbf u_R$.

Departure/arrival disagreement is an established multipath discriminator
~\cite{zheng_ghost}. A reciprocal reflected path can nevertheless preserve
both directions. For a plane $\mathbf n^{\mathsf T}\mathbf p=d$, let
$\mathbf n$ be its unit normal and $d$ its signed offset.
Its reflection map is $g(\mathbf p)=\mathbf O\mathbf p+\mathbf t$, with
$\mathbf O=\mathbf I-2\mathbf n\mathbf n^{\mathsf T}$ and $\mathbf t=2d\mathbf n$.
Here $\mathbf O$ is orthogonal and $\mathbf t$ a translation.
An ordered sequence of planar reflections has the same affine-isometry form.
For any number of nodes sharing that valid reciprocal image family, unfolding
each path gives the direct-report geometry of
\begin{equation}
\mathbf p'=\mathbf O\mathbf p+\mathbf t,
\qquad \mathbf v'=\mathbf O\mathbf v.
\label{eq:image_state}
\end{equation}
Unfolding preserves path length; its time derivative preserves Doppler, and
the final segment points toward the image. This remains true across epochs
while the family stays visible; primes denote the resulting image position
and velocity. A bistatic extension requires the same image map on both legs.

\begin{corollary}[A report-only ambiguity floor]
\label{cor:image_floor}
Suppose two admissible physical/path hypotheses induce identical report
distributions, including any reported nuisance amplitudes. For every state
estimator $\widehat{\mathbf x}$ and chosen state-error weight
$\mathbf W\succeq0$,
\begin{equation}
\max_{j=0,1}\mathbb E_j\|\widehat{\mathbf x}-\mathbf x_j\|_{\mathbf W}^2
\ge\tfrac14\|\mathbf x_0-\mathbf x_1\|_{\mathbf W}^2.
\label{eq:ambiguity_floor}
\end{equation}
where $\mathbb E_j$ denotes expectation under hypothesis $j$.
\end{corollary}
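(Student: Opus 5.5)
The argument is a two-point (Le Cam-type) bound that uses only the fact that the two hypotheses share a report distribution. Let $P$ denote the common law of the reports under $j=0$ and $j=1$. Any estimator $\widehat{\mathbf x}$, possibly randomized, is a measurable function of the reports and an independent randomization. Its distribution is therefore the same under both hypotheses, so $\mathbb E_0 f(\widehat{\mathbf x})=\mathbb E_1 f(\widehat{\mathbf x})$ for every nonnegative $f$. Reported nuisance amplitudes are already part of the data whose distribution coincides, so they cannot separate the hypotheses.

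First, I bound the maximum by the average:
\[
\max_{j}\mathbb E_j\|\widehat{\mathbf x}-\mathbf x_j\|_{\mathbf W}^2
\ge\tfrac12\,\mathbb E_P\bigl(\|\widehat{\mathbf x}-\mathbf x_0\|_{\mathbf W}^2+\|\widehat{\mathbf x}-\mathbf x_1\|_{\mathbf W}^2\bigr).
\]
Replacing both expectations by $\mathbb E_P$ is the only place where the equality of the distributions enters.

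Second, I apply a pointwise parallelogram inequality for the seminorm induced by $\mathbf W\succeq0$. For any $\mathbf y$,
\[
\|\mathbf y-\mathbf x_0\|_{\mathbf W}^2+\|\mathbf y-\mathbf x_1\|_{\mathbf W}^2
=2\bigl\|\mathbf y-\tfrac12(\mathbf x_0+\mathbf x_1)\bigr\|_{\mathbf W}^2+\tfrac12\|\mathbf x_0-\mathbf x_1\|_{\mathbf W}^2
\ge\tfrac12\|\mathbf x_0-\mathbf x_1\|_{\mathbf W}^2.
\]
This follows by expanding the quadratic form $\mathbf e^{\mathsf T}\mathbf W\mathbf e$. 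Positive definiteness is not needed; the first term is simply nonnegative. Setting $\mathbf y=\widehat{\mathbf x}$, taking $\mathbb E_P$, and multiplying by $\tfrac12$ gives exactly $\tfrac14\|\mathbf x_0-\mathbf x_1\|_{\mathbf W}^2$, which is \eqref{eq:ambiguity_floor}.

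There is no real obstacle. The only care needed is measurability and integrability: if either risk is infinite the bound holds trivially.

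To tie the corollary back to the text, I would note that the image family \eqref{eq:image_state} gives such a pair $(\mathbf x,\mathbf x')$ with $\mathbf x'=(\mathbf O\mathbf p+\mathbf t,\mathbf O\mathbf v)$, because unfolding preserves every reported coordinate. Alternatively, Theorem~\ref{thm:distinguish} gives such a pair whenever $d_*=0$, since the overlapping admissible mean sets then admit a common report distribution. In either case the floor holds no matter how many nodes are added or how small the random error becomes.
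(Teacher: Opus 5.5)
Your proof is correct and is essentially the paper's argument: you bound the maximum by the average under the common law, then use the midpoint (parallelogram) decomposition to obtain $\mathbb E\|\widehat{\mathbf x}-(\mathbf x_0+\mathbf x_1)/2\|_{\mathbf W}^2+\tfrac14\|\mathbf x_0-\mathbf x_1\|_{\mathbf W}^2$. The paper also remarks that the midpoint estimator attains equality, so the floor is tight.
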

\begin{proof}
Under the shared law, averaging the two risks gives
$\mathbb E\|\widehat{\mathbf x}-(\mathbf x_0+\mathbf x_1)/2\|_{\mathbf W}^2
+\|\mathbf x_0-\mathbf x_1\|_{\mathbf W}^2/4$.
The maximum is at least this average; the midpoint attains equality.
\end{proof}
The image construction gives identical four-coordinate reports for
admissible states with unclassified path amplitudes, without equating
complete scene responses. Map information, independent signatures, and
visibility changes can break this equivalence; additional nodes or higher
signal-to-noise ratio (SNR) alone need not.
\begin{figure*}[t]\centering
\begin{minipage}[b]{.48\textwidth}\centering
\begin{tikzpicture}\begin{axis}[paperaxis,width=\linewidth,height=5.3cm,
xlabel={Departure--arrival azimuth difference ($^\circ$)},ylabel={Ideal TX consistency},
xmin=0,xmax=60,ymin=0,ymax=1.05,legend style={at={(.02,.03)},anchor=south west,font=\scriptsize}]
\addplot[paperblue,thick] coordinates {(0,1) (0.5,0.9998784) (1,0.99975125) (1.5,0.99994783) (2,1) (2.5,0.99996572) (3,0.99996217) (3.5,0.99999143) (4,1) (4.5,0.99998897) (5,0.99994236) (5.5,0.99982703) (6,0.9995933) (6.5,0.99917733) (7,0.99866031) (7.5,0.99852258) (8,0.9981092) (8.5,0.99724875) (9,0.99676127) (9.5,0.99621724) (10,0.99494786) (10.5,0.99288079) (11,0.99183884) (11.5,0.9897992) (12,0.98666884) (12.5,0.98237918) (13,0.97689034) (13.5,0.97019429) (14,0.96231667) (14.5,0.95331656) (15,0.94328403) (15.5,0.93233536) (16,0.92060587) (16.5,0.90824096) (17,0.8953859) (17.5,0.88217536) (18,0.86872344) (18.5,0.85758721) (19,0.84846812) (19.5,0.83981213) (20,0.83158139) (20.5,0.82368866) (21,0.81600175) (21.5,0.80835026) (22,0.80053392) (22.5,0.79233199) (23,0.78351299) (23.5,0.77384443) (24,0.76898476) (24.5,0.77519697) (25,0.77991722) (25.5,0.78282507) (26,0.78362864) (26.5,0.78206925) (27,0.77792503) (27.5,0.77101361) (28,0.76119399) (28.5,0.74836764) (29,0.73247908) (29.5,0.71351612) (30,0.69150988) (30.5,0.66653487) (31,0.63870934) (31.5,0.60819591) (32,0.57520265) (32.5,0.53998462) (33,0.50284568) (33.5,0.46414033) (34,0.42427533) (34.5,0.38371033) (35,0.34295717) (35.5,0.30257681) (36,0.26317333) (36.5,0.2253842) (37,0.1898663) (37.5,0.1572776) (38,0.14631651) (38.5,0.16403307) (39,0.18956558) (39.5,0.21922107) (40,0.25243586) (40.5,0.28853287) (41,0.32674712) (41.5,0.36625784) (42,0.40622451) (42.5,0.4458238) (43,0.48428406) (43.5,0.52091472) (44,0.55512867) (44.5,0.58645646) (45,0.61455249) (45.5,0.63919387) (46,0.66027324) (46.5,0.67778732) (47,0.69182267) (47.5,0.70254021) (48,0.71015961) (48.5,0.71494448) (49,0.71718881) (49.5,0.71720512) (50,0.71531418) (50.5,0.71183645) (51,0.70708502) (51.5,0.70135988) (52,0.69494335) (52.5,0.69360802) (53,0.69553013) (53.5,0.69721298) (54,0.69881357) (54.5,0.70046737) (55,0.71443355) (55.5,0.72847938) (56,0.74197418) (56.5,0.75498066) (57,0.76755106) (57.5,0.77972624) (58,0.79153492) (58.5,0.80299303) (59,0.81410322) (59.5,0.82485451) (60,0.83522213)};\addlegendentry{Three implemented codes}
\addplot[paperblack,dashed] coordinates {(0,1)(60,1)};\addlegendentry{One code}
\addplot[paperred,dotted,thick] coordinates {(0,.85)(60,.85)};\addlegendentry{Strong gate 0.85}
\end{axis}\end{tikzpicture}
\par\smallskip{\footnotesize(a) Signature separation can be nonmonotone}
\end{minipage}\hfill
\begin{minipage}[b]{.48\textwidth}\centering
\begin{tikzpicture}[x=1.1cm,y=1.1cm,font=\scriptsize,>=stealth]
\path[use as bounding box] (-2.55,-1.15) rectangle (2.3,3.4);
\draw[very thick,black!60] (-2.4,1.8)--(2.1,1.8);
\node[anchor=west,fill=white,inner sep=1pt] at (.35,1.95) {stationary reflecting plane};
\node[circle,fill=paperblack,inner sep=2pt,label={[xshift=-18pt]below:UAV $\mathbf p,\mathbf v$}] at (0,.8) (p) {};
\node[circle,draw=paperred,thick,inner sep=2pt,label={above:virtual $\mathbf p',\mathbf v'$}] at (0,2.8) (v) {};
\draw[gray,densely dotted] (p)--(v);
\node[rectangle,fill=paperblue,inner sep=2pt,label=below:Node 0] at (-2.0,0.0) (b0) {};
\coordinate (q0) at (-0.7142857142857142,1.8);\draw[paperblue,thick] (b0)--(q0)--(p);\draw[paperblue,dashed] (q0)--(v);
\node[rectangle,fill=papergreen,inner sep=2pt,label=below:Node 1] at (1.0,-0.8) (b1) {};
\coordinate (q1) at (0.27777777777777785,1.8);\draw[papergreen,thick] (b1)--(q1)--(p);\draw[papergreen,dashed] (q1)--(v);
\node[rectangle,fill=paperred,inner sep=2pt,label=below:Node 2] at (1.0,0.8) (b2) {};
\coordinate (q2) at (0.5,1.8);\draw[paperred,thick] (b2)--(q2)--(p);\draw[paperred,dashed] (q2)--(v);

\end{tikzpicture}
\par\smallskip{\footnotesize(b) Reciprocal shared-image ambiguity (plan view)}
\end{minipage}
\caption{Information and ambiguity of coded reference-signal paths. (a) The exact panel's departure signature is tested within a spherical $4^\circ$ cap about arrival $0^\circ/30^\circ$, with $2^\circ$ candidate spacing. Departure elevation is $30^\circ$; the path is noiseless. (b) A controlled valid plane at $y=180$ m reflects $\mathbf p=(0,80,120)$ m to $\mathbf p'=(0,280,120)$ m and $\mathbf v=(20,-12,5)$ m/s to $\mathbf v'=(20,12,5)$ m/s. Report equivalence assumes illuminated, admissible image states and isolated reciprocal paths.}
\label{fig:path_mechanism}\end{figure*}
\section{Bias-Aware Refinement of Confirmed Targets}
\label{sec:minimax}
\subsection{Risk of a Weak-Report Update}
The strong-report baseline estimate $\widehat{\mathbf x}_S$ has zero-mean
error $\mathbf e_S=\widehat{\mathbf x}_S-\mathbf x$ and covariance
$\mathbf P\succeq0$. After nuisance removal, recenter and linearize fixed
additional reports as $\mathbf z=\mathbf H\mathbf x+\mathbf b+\boldsymbol\epsilon$.
Here $\mathbf z\in\mathbb R^m$ contains $m$ scalar coordinates, $\mathbf H$
is their state-sensitivity matrix, $\mathbf b$ deterministic bias, and
$\boldsymbol\epsilon$ zero-mean error with covariance $\mathbf R$.
Cross-covariance $\mathbf C=\operatorname{cov}(\mathbf e_S,\boldsymbol\epsilon)$
describes baseline/weak-noise dependence. Define
\begin{align}
\mathbf r&=\mathbf z-\mathbf H\widehat{\mathbf x}_S,
\quad\mathbf A=\mathbf P\mathbf H^{\mathsf T}-\mathbf C,\nonumber\\
\mathbf T&=\mathbf H\mathbf P\mathbf H^{\mathsf T}+\mathbf R
-\mathbf H\mathbf C-\mathbf C^{\mathsf T}\mathbf H^{\mathsf T}\succ0.
\label{eq:innovation}
\end{align}
Innovation $\mathbf r$ subtracts the baseline prediction; $\mathbf T$ is
its random covariance and $\mathbf A=-\operatorname{cov}(\mathbf e_S,\mathbf r)$
the signed baseline coupling. The joint covariance of
$(\mathbf e_S,\boldsymbol\epsilon)$ must be positive semidefinite.
Bias is $\mathbf b=\mathbf B^{1/2}\boldsymbol\xi$, with ellipsoid shape
$\mathbf B\succeq0$, unknown normalized vector $\boldsymbol\xi$, and radius
$\rho\ge0$ satisfying $\|\boldsymbol\xi\|\le\rho$. Coverage of path errors,
quantization, and nonlinear remainders requires separate justification.

Gain $\mathbf L$ maps innovations to state corrections; loss matrix
$\mathbf W\succeq0$ balances units and application priorities. Estimate
$\widehat{\mathbf x}_L=\widehat{\mathbf x}_S+\mathbf L\mathbf r$ has worst
expected weighted squared error
$\mathcal R_\rho(\mathbf L)=\sup_{\|\boldsymbol\xi\|\le\rho}
\mathbb E\|\widehat{\mathbf x}_L-\mathbf x\|_{\mathbf W}^2$, namely
\begin{align}
\mathcal R_\rho(\mathbf L)
&=\operatorname{tr}(\mathbf W\mathbf P)
-2\operatorname{tr}(\mathbf W\mathbf L\mathbf A^{\mathsf T})
+\operatorname{tr}(\mathbf W\mathbf L\mathbf T\mathbf L^{\mathsf T})\nonumber\\
&\quad+\rho^2\|\mathbf W^{1/2}\mathbf L\mathbf B^{1/2}\|_2^2.
\label{eq:minimax_risk}
\end{align}
Expanding random-error loss and maximizing deterministic bias gives
\eqref{eq:minimax_risk} without Gaussianity. Hypothesis, gain, moments, and
bias set are fixed during expectation, or conditioned on independent
calibration; data-dependent association and covariance proxies require
empirical validation. Classical robust estimation separates variance from
bias \cite{kassam_robust,eldar_biased,fortunati_misspec,eldar_robust_mse}.
The following minimax specialization allocates least-favorable bias energy
to correlated strong/weak innovations under a specified sensing bias class.

\begin{theorem}[Exact multi-report minimax refinement]
\label{thm:minimax}
Let $\mathcal Q_\rho=\{\mathbf Q\succeq0:\operatorname{tr}\mathbf Q\le\rho^2\}$
be the admissible set of bias-energy allocation matrices, and let
$\mathbf S_Q=\mathbf T+\mathbf B^{1/2}\mathbf Q\mathbf B^{1/2}$ be the
corresponding augmented innovation matrix. The convex problem defining the
optimized risk reduction $V_\rho$,
\begin{equation}
V_\rho=\min_{\mathbf Q\in\mathcal Q_\rho}
\operatorname{tr}(\mathbf W\mathbf A\mathbf S_Q^{-1}\mathbf A^{\mathsf T})
\label{eq:dual}
\end{equation}
attains its solution. Starred symbols denote optimizers. For a minimizer
$\mathbf Q_*$,
\begin{equation}
\mathbf L_*=\mathbf A\mathbf S_{Q_*}^{-1},\qquad
\min_{\mathbf L}\mathcal R_\rho(\mathbf L)
=\operatorname{tr}(\mathbf W\mathbf P)-V_\rho.
\label{eq:optimal_gain}
\end{equation}
The weighted gain $\mathbf W^{1/2}\mathbf L_*$ is unique. Improvement over
no update is strict for finite $\rho$ exactly when $\mathbf W^{1/2}\mathbf A\ne0$.
\end{theorem}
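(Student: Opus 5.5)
The plan is to rewrite the worst-case bias term variationally and then exchange min and max. For any matrix $\mathbf M$, $\|\mathbf M\|_2^2=\lambda_{\max}(\mathbf M^{\mathsf T}\mathbf M)=\max_{\mathbf Q\succeq0,\operatorname{tr}\mathbf Q\le1}\operatorname{tr}(\mathbf M\mathbf Q\mathbf M^{\mathsf T})$. With $\mathbf M=\mathbf W^{1/2}\mathbf L\mathbf B^{1/2}$ and the radius absorbed into $\mathcal Q_\rho$, \eqref{eq:minimax_risk} becomes
\begin{equation*}
\mathcal R_\rho(\mathbf L)=\max_{\mathbf Q\in\mathcal Q_\rho}\Big[\operatorname{tr}(\mathbf W\mathbf P)-2\operatorname{tr}(\mathbf W\mathbf L\mathbf A^{\mathsf T})+\operatorname{tr}(\mathbf W\mathbf L\mathbf S_Q\mathbf L^{\mathsf T})\Big].
\end{equation*}
Call the bracket $f(\mathbf L,\mathbf Q)$. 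It is convex quadratic in $\mathbf L$ because $\mathbf W\succeq0$ and $\mathbf S_Q\succeq\mathbf T\succ0$. It is affine, hence concave, in $\mathbf Q$. The set $\mathcal Q_\rho$ is convex and compact. Sion's minimax theorem therefore gives $\min_{\mathbf L}\max_{\mathbf Q}f=\max_{\mathbf Q}\min_{\mathbf L}f$.

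For fixed $\mathbf Q$, complete the square:
\begin{equation*}
f=\operatorname{tr}(\mathbf W\mathbf P)-\operatorname{tr}(\mathbf W\mathbf A\mathbf S_Q^{-1}\mathbf A^{\mathsf T})+\operatorname{tr}\big(\mathbf W(\mathbf L-\mathbf A\mathbf S_Q^{-1})\mathbf S_Q(\mathbf L-\mathbf A\mathbf S_Q^{-1})^{\mathsf T}\big).
\end{equation*}
The inner minimum is $\operatorname{tr}(\mathbf W\mathbf P)-\operatorname{tr}(\mathbf W\mathbf A\mathbf S_Q^{-1}\mathbf A^{\mathsf T})$. It is attained exactly when $\mathbf W^{1/2}\mathbf L=\mathbf W^{1/2}\mathbf A\mathbf S_Q^{-1}$, because $\mathbf S_Q\succ0$. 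The dual thus reduces to \eqref{eq:dual}.

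The map $\mathbf S\mapsto\operatorname{tr}(\mathbf W\mathbf A\mathbf S^{-1}\mathbf A^{\mathsf T})$ is convex on $\mathbf S\succ0$, by operator convexity of the inverse. It is continuous on the compact $\mathcal Q_\rho$, since $\mathbf S_Q\succeq\mathbf T\succ0$ keeps it away from singularity. Hence a minimizer $\mathbf Q_*$ exists and the problem is convex, giving the value formula in \eqref{eq:optimal_gain}.

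\emph{Main obstacle.} The step I expect to be hardest is showing that $\mathbf L_*=\mathbf A\mathbf S_{Q_*}^{-1}$ is actually minimax, i.e.\ a saddle point, and that $\mathbf W^{1/2}\mathbf L_*$ is unique. My first attempt will work in the weighted variable $\mathbf G=\mathbf W^{1/2}\mathbf L$; both the risk and the bias term depend on $\mathbf L$ only through $\mathbf G$ (using $\operatorname{tr}(\mathbf W\mathbf L\mathbf T\mathbf L^{\mathsf T})=\|\mathbf G\mathbf T^{1/2}\|_F^2$). In this variable $\mathcal R_\rho$ is strictly convex, because the term $\|\mathbf G\mathbf T^{1/2}\|_F^2$ with $\mathbf T\succ0$ is strictly convex and the spectral-norm term is convex. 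The minimizing $\mathbf G$ is therefore unique. Then take any saddle pair $(\mathbf G^\star,\mathbf Q_*)$, which exists by compactness of $\mathcal Q_\rho$ plus coercivity in $\mathbf G$. The saddle inequality $f(\mathbf G^\star,\mathbf Q_*)\le f(\mathbf G,\mathbf Q_*)$ for all $\mathbf G$, combined with the uniqueness of the inner minimizer, forces $\mathbf G^\star=\mathbf W^{1/2}\mathbf A\mathbf S_{Q_*}^{-1}$ for every dual minimizer $\mathbf Q_*$. This argument also makes $\mathbf W^{1/2}\mathbf A\mathbf S_{Q_*}^{-1}$ independent of which $\mathbf Q_*$ is chosen.

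\emph{Strict improvement.} No update means $\mathbf L=0$, which has risk $\operatorname{tr}(\mathbf W\mathbf P)$, so strict improvement is equivalent to $V_\rho>0$.
\begin{itemize}
\item If $\mathbf W^{1/2}\mathbf A=0$, every term in $V_\rho$ vanishes, so $V_\rho=0$.
\item If $\mathbf W^{1/2}\mathbf A\ne0$, then $\mathbf S_Q\preceq\mathbf T+\rho^2\mathbf B$ for all $\mathbf Q\in\mathcal Q_\rho$, using $\mathbf Q\preceq(\operatorname{tr}\mathbf Q)\mathbf I$. This matrix is finite and positive definite, so $\mathbf S_Q^{-1}\succeq(\mathbf T+\rho^2\|\mathbf B\|_2\mathbf I)^{-1}\succ0$. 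Hence $V_\rho\ge\operatorname{tr}\big(\mathbf W^{1/2}\mathbf A(\mathbf T+\rho^2\|\mathbf B\|_2\mathbf I)^{-1}\mathbf A^{\mathsf T}\mathbf W^{1/2}\big)>0$.
\end{itemize}
Finiteness of $\rho$ enters exactly at this lower bound: it is what keeps $\mathbf S_Q$ bounded and the inverse bounded away from zero.
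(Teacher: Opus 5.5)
\paragraph{Comparison with the paper's proof.} Your proof is correct, but it closes the min--max gap by a different route than the paper. Both arguments share the variational form of the spectral-norm bias term and the completion of the square for fixed $\mathbf Q$. The paper then invokes no general minimax theorem. It differentiates the dual objective, $\nabla_{\mathbf Q}\operatorname{tr}(\mathbf W\mathbf A\mathbf S_Q^{-1}\mathbf A^{\mathsf T})=-\mathbf B^{1/2}\mathbf L_Q^{\mathsf T}\mathbf W\mathbf L_Q\mathbf B^{1/2}$. It reads the first-order optimality condition at a dual minimizer $\mathbf Q_*$ as saying that $\mathbf Q_*$ maximizes $\operatorname{tr}(\mathbf W\mathbf L_*\mathbf B^{1/2}\mathbf Q\mathbf B^{1/2}\mathbf L_*^{\mathsf T})$ over $\mathcal Q_\rho$. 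Combined with $\mathbf L_*$ minimizing $f(\cdot,\mathbf Q_*)$, this verifies the saddle inequalities directly for each dual minimizer.

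You instead obtain the value equality from Sion's theorem. You then recover the form of the optimal gain from uniqueness of the strictly convex primal minimizer in $\mathbf G=\mathbf W^{1/2}\mathbf L$. The paper's route is constructive and self-contained. It exhibits $\mathbf Q_*$ explicitly as the least-favorable bias allocation against $\mathbf L_*$, which is the dual structure that Corollary~\ref{cor:cap} then solves by KKT conditions. Your route uses a heavier black box but needs no derivative of the dual objective. It also makes explicit that $\mathbf W^{1/2}\mathbf A\mathbf S_{Q_*}^{-1}$ does not depend on which minimizer $\mathbf Q_*$ is chosen.

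\paragraph{A point to state explicitly.} Spell out the interchangeability fact you rely on. Once $\inf_{\mathbf G}\max_{\mathbf Q}f=\max_{\mathbf Q}\min_{\mathbf G}f$ holds with both sides attained, every primal minimizer paired with every dual optimizer is a saddle point. That property, not mere existence of some saddle pair, is what justifies your conclusion ``for every dual minimizer $\mathbf Q_*$.''

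\paragraph{Strict improvement.} Your bound $V_\rho\ge\operatorname{tr}\bigl(\mathbf W^{1/2}\mathbf A(\mathbf T+\rho^2\|\mathbf B\|_2\mathbf I)^{-1}\mathbf A^{\mathsf T}\mathbf W^{1/2}\bigr)$ is a quantitative replacement for the paper's argument that a positive continuous objective has a positive minimum on a compact set. It makes visible exactly where finiteness of $\rho$ is used.
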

\begin{proof}
Represent the bias term as
$\max_{\mathbf Q\in\mathcal Q_\rho}
\operatorname{tr}(\mathbf W\mathbf L\mathbf B^{1/2}\mathbf Q
\mathbf B^{1/2}\mathbf L^{\mathsf T})$.
For fixed $\mathbf Q$, completing the square gives minimizer
$\mathbf L_Q=\mathbf A\mathbf S_Q^{-1}$ and value
$\operatorname{tr}(\mathbf W\mathbf P)-
\operatorname{tr}(\mathbf W\mathbf A\mathbf S_Q^{-1}\mathbf A^{\mathsf T})$.
The inverse-trace objective is convex and continuous on compact
$\mathcal Q_\rho$. Its derivative is
$-\mathbf B^{1/2}\mathbf L_Q^{\mathsf T}\mathbf W\mathbf L_Q\mathbf B^{1/2}$.
First-order optimality makes $\mathbf Q_*$ maximize bias at $\mathbf L_*$,
which minimizes the fixed-$\mathbf Q_*$ quadratic. These saddle inequalities prove
\eqref{eq:optimal_gain}. Positive-definite $\mathbf T$ gives strict convexity
in the weighted gain. If $\mathbf W^{1/2}\mathbf A\ne0$, the positive
inverse-trace objective has a positive minimum on the compact set; otherwise
the value is zero. This proves the improvement criterion.
\end{proof}
\subsection{Directional Admission and Spectral Refinement}
Take the innovation-normalized design set
$\mathbf b^{\mathsf T}\mathbf T^{-1}\mathbf b\le\rho^2$, i.e., $\mathbf B=\mathbf T$.
The resulting mode matrix has eigendecomposition
\begin{equation}
\mathbf M=\mathbf T^{-1/2}\mathbf A^{\mathsf T}\mathbf W\mathbf A\mathbf T^{-1/2}
=\mathbf V_M\operatorname{diag}(\mu_j)\mathbf V_M^{\mathsf T},
\label{eq:gain_spectrum}
\end{equation}
Here $\mathbf V_M$ contains orthonormal eigenvectors, $\mu_j\ge0$ are
mode eigenvalues ($j=1,\ldots,m$), and $(x)_+=\max(x,0)$ denotes positive part.

\begin{corollary}[Spectral admission and its scalar alternative]
\label{cor:cap}
For $\rho>0$ and $\mathbf M\ne0$, a unique spectral cap $\kappa>0$ solves
\begin{equation}
\rho^2\kappa=\sum_j(\sqrt{\mu_j}-\kappa)_+.
\label{eq:cap}
\end{equation}
The admission factor of mode $j$ is
$s_j=\min(1,\kappa/\sqrt{\mu_j})$, with $s_j=1$ for $\mu_j=0$.
Then
\begin{equation}
\mathbf L_*=\mathbf A\mathbf T^{-1/2}\mathbf V_M
\operatorname{diag}(s_j)\mathbf V_M^{\mathsf T}\mathbf T^{-1/2},
\quad V_\rho=\sum_j\mu_js_j.
\label{eq:spectral_gain}
\end{equation}
Restricting $\mathbf L=\alpha\mathbf A\mathbf T^{-1}$ uses a uniform scalar
gain coefficient $\alpha$. Its optimum is
\begin{equation}
\alpha_*={\operatorname{tr}\mathbf M\over
\operatorname{tr}\mathbf M+\rho^2\lambda_{\max}(\mathbf M)},
\label{eq:uniform_gain}
\end{equation}
where $\lambda_{\max}(\mathbf M)$ is the largest eigenvalue of $\mathbf M$.
The directional optimum is strictly better than this optimized scalar rule
when the nonzero $\mu_j$ are unequal, and equal when they are all equal.
\end{corollary}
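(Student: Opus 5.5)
With $\mathbf B=\mathbf T$ the risk \eqref{eq:minimax_risk} should diagonalize after a change of variables, so the plan is to reduce everything to a problem on singular values. Define the whitened coupling $\mathbf K=\mathbf W^{1/2}\mathbf A\mathbf T^{-1/2}$, so that $\mathbf M=\mathbf K^{\mathsf T}\mathbf K$. Let $\mathbf K$ have right singular vectors $\mathbf V_M$ and singular values $\sigma_j=\sqrt{\mu_j}$. Also set $\mathbf G=\mathbf W^{1/2}\mathbf L\mathbf T^{1/2}$. Completing the square in \eqref{eq:minimax_risk} then gives
\begin{equation*}
\mathcal R_\rho(\mathbf L)=\operatorname{tr}(\mathbf W\mathbf P)-\|\mathbf K\|_F^2+\|\mathbf G-\mathbf K\|_F^2+\rho^2\|\mathbf G\|_2^2 .
\end{equation*}
So minimizing the risk amounts to evaluating the proximal map of the squared spectral norm at $\mathbf K$.

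\textbf{Step 1: reduction to a scalar cap.} By von Neumann's trace inequality, for fixed singular values of $\mathbf G$ the term $\|\mathbf G-\mathbf K\|_F^2$ is minimized when $\mathbf G$ shares the singular vectors of $\mathbf K$. This reduces the problem to minimizing $\sum_j(\sigma_j-g_j)^2+\rho^2\max_j g_j^2$ over $g_j\ge0$. Fix a cap $c=\max_j g_j$. The best choice is then $g_j=\min(\sigma_j,c)$, which leaves the one-dimensional function $\phi(c)=\sum_j(\sigma_j-c)_+^2+\rho^2c^2$. This function is strictly convex for $\rho>0$, and its stationarity condition is exactly \eqref{eq:cap}.

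\textbf{Step 2: uniqueness of $\kappa$.} On $[0,\infty)$ the left side $\rho^2\kappa$ of \eqref{eq:cap} is strictly increasing from $0$. The right side is continuous and nonincreasing, positive at $0$ because $\mathbf M\ne0$, and zero beyond $\sigma_{\max}$. Hence there is a unique root $\kappa\in(0,\sigma_{\max})$.

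\textbf{Step 3: recovering the gain and the value.} The optimizer is $\mathbf G_*=\mathbf K\mathbf V_M\operatorname{diag}(s_j)\mathbf V_M^{\mathsf T}$. Undoing the substitution gives the gain in \eqref{eq:spectral_gain}; only $\mathbf W^{1/2}\mathbf L_*$ is determined, which is consistent with the uniqueness in Theorem~\ref{thm:minimax}. For the value, $V_\rho=\|\mathbf K\|_F^2-\phi(\kappa)$. Split the modes into those with $\sigma_j\le\kappa$ and those above the cap. Subtracting $\sum_j\mu_js_j$ leaves $\kappa\bigl(\sum_j(\sigma_j-\kappa)_+-\rho^2\kappa\bigr)$, which vanishes by \eqref{eq:cap}.

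\textbf{Step 4: the scalar rule.} Substituting $\mathbf G=\alpha\mathbf K$ gives $\mathcal R=\operatorname{tr}(\mathbf W\mathbf P)-\operatorname{tr}\mathbf M+(1-\alpha)^2\operatorname{tr}\mathbf M+\rho^2\alpha^2\lambda_{\max}(\mathbf M)$. This is a scalar quadratic in $\alpha$, and minimizing it yields \eqref{eq:uniform_gain}.

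\textbf{Step 5: strict versus equal.} Every scalar rule is feasible for the full problem, so the directional optimum is never worse. Equality holds iff the scalar optimum coincides with the unique weighted minimizer of Theorem~\ref{thm:minimax}, that is, iff $\alpha_*\sigma_j=\min(\sigma_j,\kappa)$ for every nonzero $\sigma_j$. Since $\rho>0$ forces $\alpha_*<1$, no nonzero mode may lie at or below the cap. Then $\alpha_*=\kappa/\sigma_j$ for all of them, so all nonzero $\mu_j$ must be equal. Conversely, if all nonzero $\mu_j$ are equal, then $s_j$ is constant on the nonzero modes. The zero modes do not affect $\mathbf G_*$ because $\mathbf K$ annihilates them, so the optimum is itself of scalar form.

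The main obstacle is Step 1. The spectral norm couples the singular values, so I will justify the reduction by lower-bounding $\|\mathbf G-\mathbf K\|_F^2$ via von Neumann and noting that the spectral-norm term depends only on the singular values. The clipping form itself is the established proximal result \cite{jacobsson_squared_infinity}, which I will cite rather than re-derive.
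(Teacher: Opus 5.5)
Your proof is correct, but it takes a different route from the paper's: you work on the primal side, while the paper works on the dual problem \eqref{eq:dual}.

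\textbf{The paper's route.} Setting $\mathbf B=\mathbf T$ turns the dual objective into $\operatorname{tr}(\mathbf M(\mathbf I+\mathbf Q)^{-1})$. In the eigenbasis of $\mathbf M$, the Schur-complement inequality $[(\mathbf I+\mathbf Q)^{-1}]_{jj}\ge 1/(1+Q_{jj})$ shows that a diagonal bias allocation is optimal. The remaining water-filling problem $\min\sum_j\mu_j/(1+q_j)$ is then solved by KKT, giving $q_j=(\sqrt{\mu_j}/\kappa-1)_+$. The gain $\mathbf A\mathbf S_{Q_*}^{-1}$ and the value $\sum_j\mu_js_j$ follow at once from Theorem~\ref{thm:minimax}.

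\textbf{Your route.} You complete the square in $\mathbf L$ to obtain the proximal map of $\rho^2\|\cdot\|_2^2$ at $\mathbf K$. Von Neumann's inequality reduces this to singular values, and the problem collapses to the strictly convex scalar function $\phi(c)$.

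\textbf{What each buys.} The dual route exhibits the least-favorable bias allocation explicitly, which is the saddle-point structure behind the minimax. Your route instead proves directly the singular-value-clipping interpretation that the paper only remarks on after its proof. It also makes uniqueness of $\kappa$ and of the weighted gain transparent. Your equality argument is more careful than the paper's one-line appeal to strict convexity: using $\alpha_*<1$ to exclude nonzero modes at or below the cap is exactly the missing detail.

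\textbf{The cost, and a one-line fix.} You need the short algebra in Step~3 to recover $V_\rho$. Your argument also pins down only $\mathbf W^{1/2}\mathbf L_*$. If you want the displayed $\mathbf L_*$ to equal Theorem~\ref{thm:minimax}'s $\mathbf A\mathbf S_{Q_*}^{-1}$ exactly when $\mathbf W$ is singular, take $\mathbf Q=\mathbf V_M\operatorname{diag}(1/s_j-1)\mathbf V_M^{\mathsf T}$. Its trace is $\rho^2$ by \eqref{eq:cap}, and it attains $\operatorname{tr}(\mathbf M(\mathbf I+\mathbf Q)^{-1})=\sum_j\mu_js_j=V_\rho$. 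So it is a dual minimizer, and it produces exactly your gain.
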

\begin{proof}
In the eigenbasis of $\mathbf M$, the Schur-complement inequality
$[(\mathbf I+\mathbf Q)^{-1}]_{jj}\ge1/(1+Q_{jj})$ shows that a diagonal
$\mathbf Q$ is optimal in \eqref{eq:dual}. Writing $q_j=Q_{jj}$ for the
diagonal allocation to mode $j$, its scalar problem is
$\min_{q_j\ge0,\sum q_j\le\rho^2}\sum_j\mu_j/(1+q_j)$.
The Karush--Kuhn--Tucker optimality conditions yield
$q_j=(\sqrt{\mu_j}/\kappa-1)_+$ and a tight energy
constraint, giving \eqref{eq:cap}--\eqref{eq:spectral_gain}.
For a scalar gain the risk reduction is
$2\alpha a-\alpha^2(a+\rho^2b)$, where
$a=\operatorname{tr}\mathbf M$ and $b=\lambda_{\max}(\mathbf M)$;
maximizing gives \eqref{eq:uniform_gain}. Strict convexity makes equality
possible only when every useful mode has the same shrinkage. The cap law
has that property exactly when the positive eigenvalues coincide.
\end{proof}
The rule specializes established singular-value clipping
\cite{jacobsson_squared_infinity} to innovation gains. A six-state target has
at most six useful modes; sorting them or bisecting \eqref{eq:cap} computes
the update after whitening.

As $\rho\to\infty$, full-rank bias uncertainty suppresses the weighted update
(the complete update when $\mathbf W\succ0$). More generally, with
$\mathbf E=\mathbf T^{-1/2}\mathbf B^{1/2}$ denoting the whitened bias
directions,
\begin{equation}
\mathbf L_\infty=\mathbf A\mathbf T^{-1/2}
(\mathbf I-\mathbf E\mathbf E^{\dagger})\mathbf T^{-1/2}.
\label{eq:bias_projection}
\end{equation}
The weighted gain converges to $\mathbf W^{1/2}\mathbf L_\infty$: bounded risk
forces weighted bias response to zero, while the positive-definite quadratic
bounds the weighted gain. Every limit solves the bias-free constrained
quadratic, whose unique weighted solution is \eqref{eq:bias_projection}.
The factor $\mathbf I-\mathbf E\mathbf E^{\dagger}$ projects away the
column space of those bias directions, so bias-free report combinations
survive. Before optimization, unbounded
reference nuisance uses the same projection with its nuisance directions.

\subsection{Joint Report Selection}
For a fixed baseline and consistently projected global bias set, subset
$\mathcal J$ has optimized reduction $V_\rho(\mathcal J)$. Monotonicity follows
by allowing zero gain on extra reports. With acquisition cost $c_a\ge0$ for
report $a$ and communication budget $C_{\max}$, selection can choose
\begin{equation}
\max_{\mathcal J:\sum_{a\in\mathcal J}c_a\le C_{\max}}
V_\rho(\mathcal J).\label{eq:budget}
\end{equation}
Fisher-based selection has substantial precedents~\cite{joshi_selection,
chepuri_selection,liu_correlated,chamon_selection}; robust value cannot
generally inherit a greedy guarantee. For scalar baseline error variance
$P=1$, two
unit-sensitivity reports, $\mathbf C=0$, and
$\mathbf R=\left[\begin{smallmatrix}100&-99\\-99&100\end{smallmatrix}\right]$,
the nominal reduction at $\rho=0$ is $1/101$ for either report and $2/3$ for both.
With $\mathbf B=\mathbf T$ all three values are divided by $1+\rho^2$.
Correlated-noise cancellation makes joint value exceed summed individual
values at every finite radius. Group requests can matter despite low
individual signal-to-noise ratios.

\section{Reference-Signal Receiver, Evidence Decisions, and Reporting}
\label{sec:algorithms}
\subsection{Local Processing and Validity Tiers}
Each receiver removes the known reference symbols, projects each tone/antenna time sequence
onto the complement of its constant component, and applies Hann-tapered
range--Doppler transforms. Mean projection cancels the modeled stationary
background and attenuates near-zero radial-speed targets. Peaks are thresholded
using a Gamma-noise family threshold, refined in range/angle, and subjected
to seven consistency checks. The threshold budgets false alarms over search
cells, without determining multipath validity.
Spatial coherence $C_s$ is normalized energy explained by one receive steering
vector across resource/polarization groups. Temporal coherence $C_t$ is energy
explained by one Doppler phase progression. A noise-debiased implementation
of \eqref{eq:tx_signature}, maximized within a small arrival-angle cap,
checks whether resource amplitudes agree with direct illumination.
The coherent processing interval (CPI)
is the observation interval over which Doppler phase is fitted; estimates
from its first and second halves are
compared in direction, unfolded speed, and power. Finally, a weaker delayed
peak near a stronger angle/speed-compatible return is rejected as a possible
companion. Each test measures consistency, not a direct-path probability.
Factorial ablation evaluates all $2^7$ check coalitions under the fixed
processing order; Shapley-weighted marginal reductions allocate interaction
credit without adding isolated gains.

For node $i\in\{1,\ldots,K\}$, strong-report set $S_i$ contains candidates
that pass all local checks. Eligible-report set
$E_i\supseteq S_i$ retains weaker candidates and their diagnostics, with
$W_i=E_i\setminus S_i$. Each report includes the four measured coordinates,
an epoch, stable identifier (ID), source node/resource, and covariance or weight matrix.
The implemented diagonal standard-deviation proxy is
\begin{equation}\begin{aligned}
\sigma_\zeta&=\max\{\delta_\zeta/\sqrt{12},\ a_\zeta/\sqrt{\gamma_{\rm eff}}\},\\
\gamma_{\rm eff}&=\max\{1,\gamma\max(C_sC_t,0.05)\}.
\end{aligned}
\label{eq:proxy}
\end{equation}
Here $\zeta\in\{r,\alpha,\beta,\nu\}$ identifies range, receive azimuth,
elevation, or radial speed; $\sigma_\zeta$ is its reported standard deviation,
$\delta_\zeta$ its grid spacing, and $a_\zeta$ its waveform/aperture resolution
scale. The noise-subtracted observed SNR $\gamma$
gives the effective precision weight $\gamma_{\rm eff}$. The separate bias
model governs refinement influence.

\subsection{Association, Confirmation, and Protected Refinement}
For monostatic report $a$, apparent position is
$\widetilde{\mathbf p}_a=\mathbf b_a+r_a\widetilde{\mathbf u}_a$, where
$\mathbf b_a$ is the surveyed node position, $r_a$ the measured range, and
$\widetilde{\mathbf u}_a$ the unit vector obtained from measured angles.
Propagate the range/angle covariance through
$\mathbf J_a=[\widetilde{\mathbf u}_a,
r_a\partial_\alpha\widetilde{\mathbf u}_a,
r_a\partial_\beta\widetilde{\mathbf u}_a]$, the coordinate-conversion
Jacobian, to obtain Cartesian covariance
$\mathbf C_a=\mathbf J_a\mathbf R_{a,r\alpha\beta}\mathbf J_a^{\mathsf T}$,
where $\mathbf R_{a,r\alpha\beta}$ is the range/angle covariance submatrix
and $\partial_\alpha,\partial_\beta$ are angular derivatives.
The binary compatibility graph gives reports from different nodes an edge
$G(a,b)=1$ if
\begin{equation}\begin{gathered}
\boldsymbol\Sigma_{ab}=\mathbf C_a+\mathbf C_b-\mathbf C_{ab}
-\mathbf C_{ab}^{\mathsf T}+\mathbf C_{ab}^{\rm motion},\\
(\widetilde{\mathbf p}_a-\widetilde{\mathbf p}_b)^{\mathsf T}
\boldsymbol\Sigma_{ab}^{-1}
(\widetilde{\mathbf p}_a-\widetilde{\mathbf p}_b)\le\tau_g.
\label{eq:pair_gate}
\end{gathered}\end{equation}
Here $\tau_g$ is the threshold, $\mathbf C_{ab}$ the apparent-position
cross-covariance, and $\mathbf C_{ab}^{\rm motion}$ a positive semidefinite
allowance for the speed bound and epoch difference. Evaluated independent
proxies set $\mathbf C_{ab}=0$.
Bistatic queries instead use the transmitter/receiver geometry
in \eqref{eq:geometry}; half-delay is not a monostatic range.
Report prediction $\mathbf h_a^{\rm rep}$ equals
$\mathbf D_{\rm m}\mathbf h_a$, with
$\mathbf D_{\rm m}=\operatorname{diag}(1/2,1,1,1/2)$, for monostatic links,
and $\mathbf h_a$ for bistatic links. Monostatic covariance is
$\mathbf R_a=\mathbf D_{\rm m}\mathbf R_a^{\rm path}\mathbf D_{\rm m}^{\mathsf T}$,
where $\mathbf R_a^{\rm path}$ is path-coordinate covariance.
An admissible $\mathcal G$ requires at least two nodes, one report per node,
all pair edges, and a successful joint fit:
\begin{equation}\begin{aligned}
\widehat{\mathbf x}_{\mathcal G}
=\arg\min_{\mathbf x}\sum_{a\in\mathcal G}\sum_k\ell_{\kappa_H}(e_{a,k}),
\\\mathbf e_a=\mathbf L_a^{-1}[\mathbf z_a-\mathbf h_a^{\rm rep}(\mathbf x)],
\quad\mathbf R_a=\mathbf L_a\mathbf L_a^{\mathsf T}.
\end{aligned}
\label{eq:fit}
\end{equation}
In \eqref{eq:fit}, $\mathbf z_a$ and $\mathbf R_a$ are report $a$ and its
covariance, $\mathbf L_a$ is its lower-triangular Cholesky factor, and
$e_{a,k}$ is coordinate $k$ of the whitened residual $\mathbf e_a$.
This fit uses block-diagonal weight proxies; known cross-report covariance
instead whitens the stacked residual jointly.
Nuisance references are registered before fitting; unresolved
$\boldsymbol\eta$ must be fitted jointly or projected out.
Angle residuals are wrapped before whitening. The Huber loss
$\ell_{\kappa_H}(e)$ equals $e^2/2$ for $|e|\le\kappa_H$ and
$\kappa_H|e|-\kappa_H^2/2$ otherwise, with dimensionless threshold
$\kappa_H>0$. Acceptance also bounds
$\max|e_{a,k}|$, $Q=\sum_a\|\mathbf e_a\|^2$, and fitted speed.
The residual dimension is $m-\operatorname{rank}\widetilde{\mathbf H}$
for $m$ scalar measurement coordinates. Post-selection thresholds are design
gates, not calibrated chi-square probabilities. Unobservable velocity
directions are flagged.

Following established association foundations~\cite{reid_mht,saga}, we use one common admissible-group family
to compare strict consensus (SC), which requires all-strong support;
permissive confirmation (PC), which permits all-weak support;
anchored confirmation (AC), which requires at least one strong report; and
staged anchored recovery (SAR), which preserves SC before residual AC.
Greedy exclusive selection ranks groups by strong support, total support,
normalized residual, quality, then stable IDs. It is deterministic and need
not solve globally optimal set packing. Table~\ref{tab:algorithm} summarizes
the resulting processing sequence.

The risk-aware extension separates \emph{confirmation identity} from
\emph{state refinement}. After residual recovery it applies
\eqref{eq:spectral_gain} to weak augmentations whose strong membership
exactly matches the confirmed group, excluding records consumed by recovery.
Thus refinement preserves confirmed associations. At $\rho=0$ the gain is
nominal; infinite radius reproduces locked SAR, and finite radius controls
directional influence.

\begin{table}[t]\centering
\caption{Operational evidence processing. The fitting model remains general.}
\label{tab:algorithm}
\setlength{\tabcolsep}{3pt}
\begin{tabular}{@{}p{.035\columnwidth}p{.915\columnwidth}@{}}\toprule
1 & Upload strong reports with poses/references, IDs, epochs, and uncertainty.\\
2 & Fit compatible strong groups; select exclusive SC confirmations.\\
3 & Reserve their IDs; unused strong reports become residual anchors.\\
4 & Query eligible weak records in each anchor's compatibility neighborhood.\\
5 & Fit/select residual anchored groups; append recovered confirmations.\\
6 & Acquire eligible weak refinements around confirmed states;
    preserve strict strong membership;
    exclude records already consumed by recovery.\\
7 & Form $\mathbf P,\mathbf R,\mathbf C,\mathbf H$ and the specified bias set;
    remove unresolved reference directions; compute $\mathbf L_*$.\\
8 & Update confirmed states; report support, uncertainty assumptions, and
    observable velocity subspace.\\\bottomrule
\end{tabular}\end{table}

\subsection{Residual Access and Communication Cost}
After SC reserves its reports, let $S^{\rm res}$ contain unused strong anchors.
Let $W=\bigcup_i W_i$ denote the network's eligible weak-report pool. Request
\begin{equation}
W^{\rm cue}=\{w\in W:\exists s\in S^{\rm res},\ G(s,w)=1\}.
\label{eq:cue}
\end{equation}
With fixed gates, cached values/weights, stable IDs, complete replies, and
uncapped enumeration, screening preserves the residual anchored group family:
every weak member has an edge to its group's anchor and returns, and
screening adds no group. Identical deterministic fitting/selection therefore
gives identical residual outputs. This property concerns recovery rather
than one-pass AC or refinement traffic.

Let $N_w,N_q,N_a$ count eligible weak reports, distinct recovery replies,
and residual anchors. Let $b_w,b_c$ be weak-report and cue payload
sizes in bytes. Under unicast requests to $K-1$ other nodes, the staged
payload $B_{\rm SAR}$ and full-push payload $B_{\rm push}$ differ by
\begin{equation}
B_{\rm SAR}-B_{\rm push}=(K-1)N_ab_c-(N_w-N_q)b_w.
\label{eq:traffic}
\end{equation}
The strong payload cancels. For $N_a>0$, savings require
$b_c/b_w<(N_w-N_q)/[(K-1)N_a]$. Censoring, triggered reporting, and fronthaul
compression are established~\cite{viswanathan_detection,appad_censor,
jiang_censor,zhu_compression,zhang_fronthaul}; this calculation
counts the requests needed to acquire candidate-level evidence.

Fusion needs more than point estimates: both endpoint poses and resource
identity, acquisition/reference epochs and frame, delay/rate conventions and
ambiguity status, uncertainty with units and cross-report correlation, validity
tier and consistency scores, and calibration/bias-model provenance.
Covariance describes random spread; it cannot substitute for a declared
path-bias envelope. Compressed reports must retain the quantization rule so
its uncertainty can enter \eqref{eq:report_model}. The same fields support
fusion from multiple bistatic UEs using \eqref{eq:geometry}.
\begin{figure}[t]
\centering
\begin{minipage}[b]{.98\columnwidth}\centering
\\[4pt](b) Native San Francisco roof projection
\end{minipage}
\caption{Study layouts. Both use three transmission/reception points (TRPs) separated by 500 m horizontally. In (b), the native terrain places TRPs at world heights 89.25, 50.96 and 28.23 m, all 25 m above local terrain. Building roof boundaries come from the native mesh within the displayed crop; terrain is omitted from the drawing, retained by the tracer. Dashed curves delimit the surveillance sector. San Francisco geometry: Sionna ray-tracing packaged scene, derived from OpenStreetMap (Open Database License).}
\label{fig:deployment}
\end{figure}
\section{Channel Construction and Evaluation}
\label{sec:channel}\label{sec:evaluation}
Both environments generate the fields in \eqref{eq:signal} for one common
receiver: stochastic channels capture diffuse target paths, while mapped
ray tracing retains persistent reflector geometry~\cite{zhang_channel,bazzi_rt,
zhang_rf3dgs}.

\subsection{3GPP Stochastic Channel Construction}
The Third Generation Partnership Project (3GPP) environment combines
urban macrocell (UMa) aerial-vehicle (UMa-AV) one-way LOS
probability, pathloss, shadowing, and a 15 dB LOS Ricean factor from
3GPP Technical Report (TR)~36.777~\cite{tr36777} with outdoor UMa fast fading.
Each transmission/reception point (TRP)--UAV pair draws one LOS or non-line-of-sight
(NLOS) condition and a reciprocal channel shared by both monostatic legs;
sites have independent realizations. Sensing concatenation option~2 in
TR~38.901~\cite{tr38901} retains LOS--LOS, LOS--diffuse, diffuse--LOS, and
globally permuted diffuse--diffuse ray pairs, with 20 rays per cluster and
prescribed subdelays. LOS--LOS is always retained; other pairs are discarded
when more than 40 dB below the strongest of those other pairs.

For normalized one-way powers $p_u,p_v$, paired target power is proportional
to $p_up_v\sigma_{uv}$, with radar cross section (RCS)
\begin{equation}
\sigma_{uv}=10^{-12.81/10}S_{uv}
10^{-3\sin(\beta_{uv}/2)/10}\ {\rm m}^2.
\label{eq:rcs}
\end{equation}
RCS is scattering strength expressed as equivalent isotropic area;
$\beta_{uv}$ is the target-centered ray angle, and $S_{uv}$ is dimensionless
lognormal strength with 3.74 dB spread and unit linear mean before a
three-standard-deviation upper clip, shared by reciprocal pairs.
Polarization matrices transpose under reversal; cross-polarization power
ratio (XPR) has Gaussian dB mean 13.75 and standard deviation 7.07.
For wavelength $\lambda$, two-leg dB loss includes
$20\log_{10}\lambda-10\log_{10}(4\pi)$ before subtracting RCS.
With geometric range $R$, propagation speed $c_0$, one-way excess delays
$\delta_u,\delta_v$, and diffuse indicators $I_u,I_v$, paired delay is
$2R/c_0+\delta_u+\delta_v+(I_u+I_v)\Delta$.
The additional delay obeys
$\log_{10}(\Delta/{\rm s})\sim\mathcal N(-7.4,0.2^2)$, where
$\mathcal N$ denotes normal distribution with specified mean and variance.
Doppler follows the two outward target directions.

Three independent UMa-NLOS virtual reference channels produce stationary
background. Gamma reference distances have shape 10.337, rate
$0.1317\ {\rm m}^{-1}$, and 68.7778 m shift; heights have
shape 16.2253, rate $1.9218\ {\rm m}^{-1}$, and 2.6142 m shift. Azimuths share a uniform
random rotation and $120^\circ$ offsets; departure/arrival directions are
equal. Each channel uses one reference-link loss and virtual delay,
zero Doppler, and the three-channel sum is unnormalized.

\subsection{Mapped-City Ray-Tracing Construction}
Sionna Ray Tracing (RT)~2.2.0's packaged San Francisco scene has 4,735 objects,
268,041 triangles, hilly terrain, and retained materials.
Figure~\ref{fig:deployment} shows its OpenStreetMap-derived map and the
stochastic layout. Target tracing depth three permits one target interaction
and at most two urban reflections; stationary-background depth is also three.
Diffuse scattering, diffraction, refraction, and moving background are
disabled. Isotropic target RCS shares one 3.74 dB lognormal strength across
paths/sites, with fixed reciprocal polarization and 13.75 dB XPR.
Unlike \eqref{eq:rcs}, this construction defines a complementary benchmark,
not a propagation-only comparison. Target and background budgets are
$2^{20}$ and $10^5$ rays per call, respectively.

\begin{figure*}[t]\centering
\begin{tikzpicture}[font=\footnotesize,>=stealth,x=1cm,y=1cm]
\begin{scope}[xshift=-.595cm,xscale=1.17]
\node[anchor=west,font=\small] at (.124,3.05) {(a) Matched-budget frequency / symbol allocation};
\foreach \y/\col/\name/\band in {0/papergreen/0/{0--91},.72/paperblue/1/{92--183},1.44/paperred/2/{184--271}} {
  \fill[gray!12] (.85,\y) rectangle (1.65,\y+.62);
  \foreach \s/\op in {1.65/.40,3.25/.62,4.85/.84} {
    \fill[\col,opacity=\op] (\s,\y) rectangle (\s+1.6,\y+.62);
    \draw[white] (\s,\y) rectangle (\s+1.6,\y+.62);
  }
  \node[anchor=east,align=right,font=\scriptsize] at (.73,\y+.31) {TRP \name\\RB \band};
}
\foreach \x/\v in {.85/0,1.65/2,3.25/6,4.85/10,6.45/14} {
  \draw (\x,0)--(\x,-.08);
  \node[anchor=north,font=\scriptsize] at (\x,-.10) {\v};
}
\foreach \x/\v in {2.45/1,4.05/2,5.65/3}
  \node[font=\scriptsize] at (\x,2.28) {Resource \v};
\draw[->] (.85,-.45)--(6.6,-.45) node[anchor=north east] {OFDM symbol boundary};
\node[anchor=west,font=\scriptsize] at (.85,2.67) {One PRS-bearing 0.5-ms slot; RB 272 unused};
\end{scope}
\begin{scope}[xshift=7.95cm,xscale=1.434]
\node[anchor=west,font=\small] at (-.2,3.05) {(b) Comb-2 sampling within one resource};
\foreach \x in {0,1.25,2.5,3.75,5} \draw[gray!35] (\x,0)--(\x,2);
\foreach \k in {0,...,12} \draw[gray!35] (0,{\k/6})--(5,{\k/6});
\foreach \s in {0,1,2,3} {
  \foreach \k in {0,...,11} {
    \pgfmathtruncatemacro{\occupied}{mod(\s+\k,2)}
    \ifnum\occupied=0
      \fill[paperblue] (\s*1.25+.625,{(\k+.5)/6}) circle[radius=.045];
    \fi
  }
  \node[font=\scriptsize,anchor=north] at (\s*1.25+.625,-.10) {Symbol \the\numexpr\s+1\relax};
}
\foreach \k in {0,5,11} \node[font=\scriptsize,anchor=east] at (-.08,{(\k+.5)/6}) {\k};
\node[rotate=90] at (-.45,1) {Tone index};
\draw[<->,paperred,thick] (.625,2.25)--(3.125,2.25);
\node[font=\scriptsize,anchor=south] at (1.875,2.31) {Same tone: $71.354\,\mu$s};
\node[font=\scriptsize,anchor=north] at (1.25,-.47) {Early pair};
\node[font=\scriptsize,anchor=north] at (3.75,-.47) {Late pair};
\node[font=\scriptsize,anchor=north west] at (0,-.99) {Dots: known pilots; values vary between symbols};
\end{scope}
\begin{scope}[yshift=-2.85cm]
\node[anchor=west,font=\small] at (-.45,1.2) {(c) Occasion spacing and coherent processing interval (CPI)};
\draw[->] (.4,.33)--(15.12,.33);
\node[anchor=south east,font=\scriptsize] at (15.12,.88) {Time (ms)};
\foreach \o in {0,...,31} {
  \fill[papergreen] (.4+\o*.46,.34) rectangle (.4+\o*.46+.0575,.77);
}
\foreach \o/\t in {0/0,1/4,2/8,8/32,16/64,24/96,31/124,32/128} {
  \draw (.4+\o*.46,.33)--(.4+\o*.46,.23);
  \node[font=\scriptsize,anchor=north] at (.4+\o*.46,.17) {\t};
}
\draw[<->] (.4,-.75)--(15.12,-.75);
\node[fill=white,inner sep=2pt] at (7.75,-.75) {Nominal CPI: 128 ms; 32 occasions};
\node[anchor=west,font=\scriptsize] at (.4,-1.22) {Each occasion starts every 4 ms (8 slots); pilots occupy 12 symbols in its first slot};
\end{scope}
\end{tikzpicture}
\caption{PRS evaluation waveform. (a) Disjoint TRP frequency bands each carry
three sequential spatial-code resources. (b) Twelve tones (one RB) show
comb-2 sampling. Symbols 1/3 and 2/4 sample equal frequencies, producing
early/late phase pairs after pilot removal. (c) Green bars mark PRS-bearing
slots. Doppler fitting spans 32 occasions within the 128-ms CPI; short
phase pairs resolve its speed ambiguity.}
\label{fig:prs_pattern}\end{figure*}
\subsection{PRS Waveform, Timing, and Evaluation Protocol}
Evaluation uses PRS from 3GPP New Radio (NR) Technical Specification
(TS)~38.211~\cite{ts38211,wei_prs}, with allocation/timing in
Fig.~\ref{fig:prs_pattern} and configuration in Table~\ref{tab:setup}.
A resource block (RB) contains 12 subcarriers; a resource element (RE)
is one subcarrier--symbol position. Each site sends three one-logical-port
resources in symbols 2--5, 6--9, and 10--13 of the first slot.
Comb 2 alternates even/odd tones: the first two symbols form the early view,
the last two the late view. Known pilots are removed before same-tone phase
comparison. Occasions start at 0,4,\ldots,124 ms;
Fig.~\ref{fig:prs_pattern} labels the matched-site bands and unused RB.
For the gates, $M$ counts group reports, $e$ is a standardized residual
coordinate, $Q$ its summed squared norm, and $\chi^2_d(q)$ is the chi-square
quantile at probability $q$ with $d$ degrees of freedom.

\begin{table}[t]\centering
\caption{Waveform benchmark configuration. Cooperation matches aggregate transmit resources.}
\label{tab:setup}\setlength{\tabcolsep}{3pt}
\begin{tabular}{@{}p{.38\columnwidth}p{.57\columnwidth}@{}}\toprule
Carrier / numerology & 4 GHz / 30 kHz subcarrier spacing, 273 carrier RBs\\
Full single-site PRS & 272 RBs, 52 dBm conducted power\\
Matched three-site PRS & 92/92/88 RBs; 47.229 dBm per site\\
PRS resources & Three one-logical-port resources; comb 2,
four symbols per resource\\
Array & $8\times8$ per receive polarization, $\pm45^\circ$ dual polarization;
128 physical ports; spacings $0.5\lambda/0.8\lambda$; $30^\circ$ uptilt\\
Transmit illumination & Three unit-norm spatial chirp masks;
equal collocated polarization weights synthesize local vertical polarization\\
Slow/fast timing & 32 occasions, 4 ms spacing; early/late pairs 71.354 $\mu$s apart\\
Nominal CPI / coverage & 128 ms / 5,120 target states per environment\\
Noise / residual interference & 290 K, 5 dB noise figure;
residual $-94$ dBm/100 MHz, fixed power spectral density\\
Numerical transforms & Range/Doppler 4,096/32; range search 420/800 m;
azimuth $\pm60^\circ$ locally, $2^\circ$ angle grids;
world elevation $[0,86]^\circ/[-86,86]^\circ$ (stochastic/city)\\
Pair / joint design gates & $\chi^2_3(0.9973)$;
$\max|e|\le4$, $Q\le\chi^2_{\max(4M-5,1)}(0.999)$\\\bottomrule
\end{tabular}\end{table}

For centered horizontal/vertical indices $y,z\in\{-3.5,-2.5,\ldots,3.5\}$
and resource $b\in\{1,2,3\}$, both TX polarizations use
$w_b(y,z)=e^{\mathrm j\psi_b(y,z)}/\sqrt{128}$, with phase $\psi_b$ in radians:
\begin{equation}
\begin{aligned}
\psi_b={}&-2\pi(0.5\xi_b y+0.8[0.173238+\xi_b]z)\\
&-\frac\pi7(0.995929s_{y,b}y^2+0.96s_{z,b}z^2).
\end{aligned}\label{eq:tx_masks}
\end{equation}
Quadratic signs $s_{y,b},s_{z,b}$ and linear gradient $\xi_b$ have triples
$(1,1,0)$, $(-1,1,0.12)$, and $(1,-1,-0.12)$.
The TR~38.901 element has 8 dBi peak gain, $65^\circ$ beamwidths, 30 dB cuts,
and polarization model~2~\cite{tr38901}. Panel rotation is
$\mathbf R_z(\theta_i)\mathbf R_y(-30^\circ)$: $\mathbf R_z,\mathbf R_y$
rotate about global $z,y$ axes, and
$\theta_i=(30,167.608,-107.608)^\circ$ gives site orientations.
Local $+x$ is boresight; row index increases downward.
Masks map logical ports across the array; three resources and two RX
polarizations yield six groups and 384 virtual channels.
The 4-ms and 71.354-$\mu$s phase intervals give radial-velocity ambiguity scales
$\pm4.684$ and $\pm262.59$ m/s, respectively; short pairs unfold slow-time aliases.
TX consistency is maximized within a spherical $4^\circ$ arrival cap.
Strong $(C_s,C_t,C_{\rm TX})$ thresholds are $(0.80,0.72,0.85)$;
eligibility uses $(0.50,0.45,0.40)$.
Strong/eligible half-CPI checks require minimum split-power ratios
$0.20/0.08$, maximum angular differences $8^\circ/16^\circ$,
and maximum unfolded-speed differences $1.25/2.5$ m/s.

Each environment has 1,024 occupied drops with five UAVs and 1,024 empty
drops. Horizontal positions are uniform in area in the $120^\circ$ sector of
Fig.~\ref{fig:deployment}; altitude is 25--300 m and horizontal speed
0--50 m/s at random heading. City draws require at least 1 m roof/terrain clearance.
Sites have 500 m horizontal separation and 25 m height above terrain.
These experiments set vertical velocity $v_z=0$ and fit five states with
a 60 m/s speed margin. Horizontal velocity requires rank-two noise-weighted
horizontal sightlines with condition number at most 100. Geometry is fixed within
the CPI; Doppler follows actual symbol times. Independent
128-occupied/128-empty stochastic drops check transferred settings;
city thresholds are unchanged.

Residual self-interference (SI) is independent post-cancellation Gaussian
noise with fixed power spectral density (PSD); intercell/inter-TRP
interference is absent. Ideal OFDM RE responses omit time-domain
intersymbol interference (ISI) and intercarrier interference (ICI).
Aggregate sensing occupancy is
$(272/273)\,12/(14\times8)\times(1/2)=5.3375\%$:
12 resource symbols per eight slots, comb 2, and 272 occupied carrier RBs,
giving 626,688 REs/CPI. Full-band fixed-site and aggregate matched
cooperation have equal transmitted budgets; cooperation has three receive
apertures. Policies receive identical reports within an allocation.
Observable selection ranks sites by strong count, summed bounded quality,
then identity, without fitting or truth. SC/PC/AC/SAR isolate admission
and lock/refit choices. Report-level controls replace Huber by ordinary
least squares (LS), covariance gates by Euclidean distance, or scale
the pair threshold.

Truth association maximizes cardinality within 30 m, then minimizes total
three-dimensional position distance. For drop $d$, let
$T_d,M_d,F_d$ count truths, matches, and unmatched outputs;
$N_o,N_e$ count occupied/empty drops, and $\mathcal D_+$ contains
occupied drops with outputs. Detection probability $P_D$ and
type-one/type-two false-alarm probabilities (FA1/FA2), denoted
$P_{\rm FA1},P_{\rm FA2}$, are
\begin{align}
P_D&=N_o^{-1}\sum_{d\text{ occupied}}M_d/T_d,\nonumber\\
P_{\rm FA1}&=N_e^{-1}\sum_{d\text{ empty}}\mathbf1\{M_d+F_d>0\},\nonumber\\
P_{\rm FA2}&=|\mathcal D_+|^{-1}
\sum_{d\in\mathcal D_+}F_d/(M_d+F_d).
\label{eq:metrics}
\end{align}
Here $\mathbf1$ is the indicator function and $|\mathcal D_+|$ the set
cardinality. Following
TR~38.765~\cite{tr38765}, FA2 averages per-drop false-output fractions.
The 90th-percentile error (P90) on own matches describes each detected
population; common-truth P90 compares identical target identities.
Controlled studies use mean-squared error (MSE) and Cram\'er--Rao bound (CRB).
Root-mean-square (RMS) generalized optimal subpattern assignment (GOSPA)
error~\cite{ospa,gospa} combines localization, misses and false outputs
with metric order $p=2$, cardinality penalty $\alpha=2$, and 30 m cutoff.
Pointwise 95\% intervals use 2,000 paired whole-drop bootstrap resamples.
Zero empty alarms in 1,024 trials gives exact one-sided 95\% upper
bound 0.2921\%.
\begin{figure*}[t]\centering
\begin{minipage}[b]{.315\textwidth}\centering\begin{tikzpicture}\begin{axis}[paperaxis,width=\linewidth,height=4.5cm,xlabel={Number of nodes $K$},ylabel={Local velocity bound (m/s)},ymode=log,xmin=3,xmax=12,xtick={3,6,9,12},legend style={font=\scriptsize,at={(.98,.98)},anchor=north east}]
\addplot[paperred,mark=triangle*,thick] coordinates {(3,116.4614) (4,107.0922) (5,101.8116) (6,97.44031) (7,93.55678) (8,90.05297) (9,86.87465) (10,83.98088) (11,81.33686) (12,78.91234)};\addlegendentry{15$^\circ$ span}
\addplot[paperblue,mark=square*,thick] coordinates {(3,2.874446) (4,2.737199) (5,2.637078) (6,2.541108) (7,2.449821) (8,2.364456) (9,2.285372) (10,2.212389) (11,2.145084) (12,2.082952)};\addlegendentry{90$^\circ$ span}
\addplot[papergreen,mark=*,thick] coordinates {(3,0.539088) (4,0.4658475) (5,0.416448) (6,0.3801633) (7,0.3519628) (8,0.3292311) (9,0.310402) (10,0.2944732) (11,0.2807692) (12,0.268816)};\addlegendentry{360$^\circ$ span}
\end{axis}\end{tikzpicture}\par\smallskip{\footnotesize (a) Full 3D viewing geometry}\end{minipage}
\hfill
\begin{minipage}[b]{.315\textwidth}\centering\begin{tikzpicture}\begin{axis}[paperaxis,width=\linewidth,height=4.5cm,xlabel={Actual normalized bias radius},ylabel={Worst normalized MSE},xmin=0,xmax=3,ymin=0,ymax=20,legend style={font=\scriptsize,at={(.02,.98)},anchor=north west}]
\addplot[draw=none,fill=papergreen!8,forget plot] coordinates {(0,0)(2,0)(2,20)(0,20)} \closedcycle;
\addplot[paperred,dashed,thick] coordinates {(0,1.39493) (0.25,1.506427) (0.5,1.840919) (0.75,2.398405) (1,3.178886) (1.25,4.182362) (1.5,5.408832) (1.75,6.858296) (2,8.530756) (2.25,10.42621) (2.5,12.54466) (2.75,14.8861) (3,17.45054)};\addlegendentry{Nominal}
\addplot[only marks,mark=o,mark size=1.5pt,paperred,forget plot] coordinates {(0,1.388229) (1,3.170083) (2,8.519849) (3,17.43753)};
\addplot[paperblack,dotted,thick] coordinates {(0,3.86) (0.25,3.86) (0.5,3.86) (0.75,3.86) (1,3.86) (1.25,3.86) (1.5,3.86) (1.75,3.86) (2,3.86) (2.25,3.86) (2.5,3.86) (2.75,3.86) (3,3.86)};\addlegendentry{Locked}
\addplot[only marks,mark=o,mark size=1.5pt,paperblack,forget plot] coordinates {(0,3.856377) (1,3.856377) (2,3.856377) (3,3.856377)};
\addplot[paperblue,dashdotted,thick] coordinates {(0,2.75667) (0.25,2.76402) (0.5,2.786071) (0.75,2.822822) (1,2.874273) (1.25,2.940425) (1.5,3.021277) (1.75,3.11683) (2,3.227083) (2.25,3.352036) (2.5,3.49169) (2.75,3.646044) (3,3.815099)};\addlegendentry{Best scalar}
\addplot[only marks,mark=o,mark size=1.5pt,paperblue,forget plot] coordinates {(0,2.752069) (1,2.869129) (2,3.221396) (3,3.808869)};
\addplot[papergreen,solid,thick] coordinates {(0,2.52522) (0.25,2.532597) (0.5,2.554729) (0.75,2.591616) (1,2.643257) (1.25,2.709654) (1.5,2.790804) (1.75,2.88671) (2,2.99737) (2.25,3.122785) (2.5,3.262955) (2.75,3.417879) (3,3.587558)};\addlegendentry{Directional}
\addplot[only marks,mark=o,mark size=1.5pt,papergreen,forget plot] coordinates {(0,2.519726) (1,2.637666) (2,2.991682) (3,3.581773)};
\end{axis}\end{tikzpicture}\par\smallskip{\footnotesize (b) Variance versus bias exposure}\end{minipage}
\hfill
\begin{minipage}[b]{.315\textwidth}\centering\begin{tikzpicture}\begin{axis}[paperaxis,width=\linewidth,height=4.5cm,xlabel={Shared-error loading $\zeta$},ylabel={Worst normalized MSE},xmin=0,xmax=1.5,ymin=0,ymax=6,legend style={font=\scriptsize,at={(.02,.02)},anchor=south west}]
\addplot[papergreen,solid,thick] coordinates {(0,2.928721) (0.05,2.939255) (0.1,2.95114) (0.15,2.964599) (0.2,2.979899) (0.25,2.99737) (0.3,3.017415) (0.35,3.040523) (0.4,3.067295) (0.45,3.098466) (0.5,3.134939) (0.55,3.177813) (0.6,3.228407) (0.65,3.288253) (0.7,3.35899) (0.75,3.44202) (0.8,3.537575) (0.85,3.642555) (0.9,3.746577) (0.95,3.828213) (1,3.86) (1.05,3.828213) (1.1,3.746577) (1.15,3.642555) (1.2,3.537575) (1.25,3.44202) (1.3,3.35899) (1.35,3.288253) (1.4,3.228407) (1.45,3.177813) (1.5,3.134939)};\addlegendentry{Correlation included}
\addplot[paperred,dashed,thick] coordinates {(0,2.928721) (0.05,2.947406) (0.1,2.970905) (0.15,2.999704) (0.2,3.034062) (0.25,3.073986) (0.3,3.119228) (0.35,3.169308) (0.4,3.223555) (0.45,3.281161) (0.5,3.341238) (0.55,3.402874) (0.6,3.465178) (0.65,3.527326) (0.7,3.588578) (0.75,3.648305) (0.8,3.705986) (0.85,3.761216) (0.9,3.813691) (0.95,3.863206) (1,3.909638) (1.05,3.952935) (1.1,3.993105) (1.15,4.030202) (1.2,4.064316) (1.25,4.095565) (1.3,4.124084) (1.35,4.150019) (1.4,4.173524) (1.45,4.194755) (1.5,4.213868)};\addlegendentry{Correlation ignored}
\addplot[paperblack,dotted,thick] coordinates {(0,3.86)(1.5,3.86)};\addlegendentry{Locked baseline}
\end{axis}\end{tikzpicture}\par\smallskip{\footnotesize (c) Correlation changes report value}\end{minipage}
\caption{General-model controls. (a) Simultaneous monostatic nodes at 500 m horizontal range observe $\mathbf v=(25,-15,8)$ m/s. Range, angle, and radial-speed error standard deviations are $\sigma_r=1$ m, $\sigma_{\rm angle}=0.5^\circ$, and $\sigma_\nu=0.15$ m/s. Two views have deficient 3D velocity rank. (b) The six-state/eight-coordinate linearized monostatic fixture compares exact worst risks and 200,000-sample Monte Carlo marks, using each method's worst bias direction. Gains use design radius $\rho=2$; shading marks actual bias radii at most two. (c) Shared-error loading $\zeta$ in $\boldsymbol\epsilon=\zeta\mathbf H\mathbf e_S+\boldsymbol\epsilon_0$ sets report/baseline correlation; $\boldsymbol\epsilon_0$ is independent noise. At $\zeta=1$ the innovation carries no baseline-error information.}
\label{fig:controlled}\end{figure*}

\begin{figure*}[t]\centering
\begin{tikzpicture}
\begin{axis}[name=benefit,paperaxis,width=.48\textwidth,height=3.6cm,ybar,bar width=6pt,ylabel={Share of false-output\\reduction (\%)},ylabel style={align=center},xtick={0,...,6},xticklabels={Spatial,Temporal,TX signature,Delayed,Split angle,Split velocity,Split power},x tick label style={rotate=35,anchor=east,font=\scriptsize},ymin=0,ymax=55]
\addplot[fill=paperblue,draw=paperblue] coordinates {(0,22.56748) (1,43.50058) (2,16.93151) (3,11.97042) (4,4.31154) (5,0.6055553) (6,0.1129033)};
\addplot[fill=papergreen,draw=papergreen] coordinates {(0,31.68666) (1,28.31382) (2,23.99719) (3,9.806008) (4,6.064934) (5,0.0909264) (6,0.04046463)};
\end{axis}\node[overlay,anchor=south,font=\footnotesize] at ([xshift=8.0cm,yshift=6pt]benefit.north west) {\textcolor{paperblue}{\rule{7pt}{7pt}}\enspace 3GPP stochastic environment\hspace{1cm}\textcolor{papergreen}{\rule{7pt}{7pt}}\enspace Mapped city with ray tracing};\path (benefit.north west) ++(0,20pt);\node[font=\footnotesize] at (current bounding box.south) [below=3pt] {(a) Rejection benefit};\end{tikzpicture}\hfill
\begin{tikzpicture}
\begin{axis}[paperaxis,width=.48\textwidth,height=3.6cm,ybar,bar width=6pt,ylabel={Miss increase attributed\\to check (pp)},ylabel style={align=center},xtick={0,...,6},xticklabels={Spatial,Temporal,TX signature,Delayed,Split angle,Split velocity,Split power},x tick label style={rotate=35,anchor=east,font=\scriptsize},ymin=0,ymax=3]
\addplot[fill=paperblue,draw=paperblue] coordinates {(0,0.3336589) (1,2.413737) (2,0.1383464) (3,0.1009115) (4,0.1009115) (5,0.01464844) (6,0.003255208)};
\addplot[fill=papergreen,draw=papergreen] coordinates {(0,1.585286) (1,0.8186849) (2,0.7747396) (3,0.1188151) (4,0.3059896) (5,-0) (6,0.009765625)};
\end{axis}\node[font=\footnotesize] at (current bounding box.south) [below=3pt] {(b) Sensitivity cost};\end{tikzpicture}
\caption{Exact seven-check Shapley attribution from all $2^7=128$ coalitions, full-band single TRP, 1,024 occupied drops per environment. Benefit is normalized by total removed false outputs (30.368/drop for 3GPP, 6.838/drop for San Francisco). Sensitivity cost sums to 3.105 and 3.613 percentage points, respectively. The coalition-averaged marginal credits describe these populations and include interactions; they are not intrinsic physical rankings.}
\label{fig:ablation}\end{figure*}

\begin{table*}[t]\centering
\caption{Primary matched-budget results. Five targets per occupied drop; 5,120 truths per environment. FA2 is a mean occupied-drop fraction. Euclidean 30 m is the exploratory strong association comparator; refinement preserves these counts.}
\label{tab:primary_results}
\begin{tabular}{lrrrrrrrr}\toprule
&\multicolumn{4}{c}{3GPP stochastic environment}&\multicolumn{4}{c}{Mapped city with ray tracing}\\
Policy & $P_D$ (\%) & FA2 (\%) & Matched & Unmatched & $P_D$ (\%) & FA2 (\%) & Matched & Unmatched\\\midrule
Observable site selection & 97.93 & 0.614 & 5,014 & 32 & 88.75 & 11.056 & 4,544 & 595\\
SC & 96.21 & 0.036 & 4,926 & 2 & 89.16 & 0.120 & 4,565 & 7\\
PC & 96.80 & 1.285 & 4,956 & 77 & 92.71 & 0.568 & 4,747 & 32\\
AC & 96.80 & 0.088 & 4,956 & 5 & 92.46 & 0.396 & 4,734 & 22\\
SAR & 96.80 & 0.088 & 4,956 & 5 & 92.46 & 0.396 & 4,734 & 22\\
SAR, Euclidean 30 m & 96.78 & 0.036 & 4,955 & 2 & 92.50 & 0.091 & 4,736 & 5\\
\bottomrule\end{tabular}\end{table*}

\begin{figure*}[t]\centering
\begin{minipage}[b]{.48\textwidth}\centering\begin{tikzpicture}\begin{axis}[paperaxis,width=\linewidth,height=4.0cm,xmode=log,xlabel={Occupied-drop false-output fraction (\%)},ylabel={Detection (\%)},xmin=.015,xmax=2.1,ymin=95.2,ymax=98.8,legend style={at={(.02,.02)},anchor=south west,font=\scriptsize,legend columns=2}]
\addplot[only marks,mark=square*,mark size=3pt,paperblack,error bars/.cd,y dir=both,y explicit] coordinates {(0.61360677,97.929688) +=(0,0.390625) -=(0,0.41015625)};\addlegendentry{Site selection}
\addplot[only marks,mark=*,mark size=3pt,paperblue,error bars/.cd,y dir=both,y explicit] coordinates {(0.035807292,96.210938) +=(0,0.5078125) -=(0,0.52734375)};\addlegendentry{SC}
\addplot[only marks,mark=triangle*,mark size=3pt,paperred,error bars/.cd,y dir=both,y explicit] coordinates {(1.2846447,96.796875) +=(0,0.46875) -=(0,0.5078125)};\addlegendentry{PC}
\addplot[only marks,mark=diamond*,mark size=3pt,papergreen,error bars/.cd,y dir=both,y explicit] coordinates {(0.087890625,96.796875) +=(0,0.46875) -=(0,0.5078125)};\addlegendentry{AC / SAR}
\end{axis}\end{tikzpicture}\par\smallskip{\footnotesize(a) 3GPP stochastic environment}\end{minipage}
\hfill
\begin{minipage}[b]{.48\textwidth}\centering\begin{tikzpicture}\begin{axis}[paperaxis,width=\linewidth,height=4.0cm,xmode=log,xlabel={Occupied-drop false-output fraction (\%)},ylabel={Detection (\%)},xmin=.035,xmax=18,ymin=86.8,ymax=94.5,legend style={at={(.02,.02)},anchor=south west,font=\scriptsize,legend columns=2}]
\addplot[only marks,mark=square*,mark size=3pt,paperblack,error bars/.cd,y dir=both,y explicit] coordinates {(11.055966,88.75) +=(0,0.95703125) -=(0,0.95751953)};\addlegendentry{Site selection}
\addplot[only marks,mark=*,mark size=3pt,paperblue,error bars/.cd,y dir=both,y explicit] coordinates {(0.12044271,89.160156) +=(0,0.83984375) -=(0,0.87890625)};\addlegendentry{SC}
\addplot[only marks,mark=triangle*,mark size=3pt,paperred,error bars/.cd,y dir=both,y explicit] coordinates {(0.56803385,92.714844) +=(0,0.72265625) -=(0,0.68359375)};\addlegendentry{PC}
\addplot[only marks,mark=diamond*,mark size=3pt,papergreen,error bars/.cd,y dir=both,y explicit] coordinates {(0.39550781,92.460938) +=(0,0.68359375) -=(0,0.703125)};\addlegendentry{AC / SAR}
\end{axis}\end{tikzpicture}\par\smallskip{\footnotesize(b) Mapped city with ray tracing}\end{minipage}
\caption{The same matched-budget site observations support selection-only and fusion policies. Vertical bars are 95\% whole-drop detection intervals; the horizontal axis is logarithmic. AC and SAR coincide in detection/false-output scores while differing in fitted states. These are fixed operating points, not an optimized receiver operating frontier.}
\label{fig:selection_tradeoff}\end{figure*}

\begin{figure*}[t]\centering
\begin{minipage}[b]{.48\textwidth}\centering\begin{tikzpicture}\begin{axis}[paperaxis,width=\linewidth,height=4.0cm,xlabel={Common-truth position error (m)},ylabel={CDF},xmin=0,xmax=12,ymin=0,ymax=1,legend style={font=\footnotesize,at={(.98,.03)},anchor=south east}]
\addplot[paperblack,thick] coordinates {(0.30734524,0) (0.40810628,0.001) (0.63943059,0.005) (0.65840103,0.0055555556) (0.81849246,0.01) (0.85193661,0.011111111) (1.0358493,0.016666667) (1.1866692,0.022222222) (1.3148655,0.027777778) (1.4233933,0.033333333) (1.5104712,0.038888889) (1.642838,0.044444444) (1.7551735,0.05) (1.8469005,0.055555556) (1.9200395,0.061111111) (2.0411027,0.066666667) (2.0895038,0.072222222) (2.1553002,0.077777778) (2.2442061,0.083333333) (2.2760559,0.088888889) (2.3229176,0.094444444) (2.3944806,0.1) (2.4590039,0.10555556) (2.5266623,0.11111111) (2.5928007,0.11666667) (2.6461727,0.12222222) (2.6937249,0.12777778) (2.7498698,0.13333333) (2.8168592,0.13888889) (2.867023,0.14444444) (2.9033951,0.15) (2.9359521,0.15555556) (2.9872855,0.16111111) (3.0288185,0.16666667) (3.0723592,0.17222222) (3.126239,0.17777778) (3.1811573,0.18333333) (3.223304,0.18888889) (3.2700664,0.19444444) (3.3279939,0.2) (3.3915037,0.20555556) (3.4243841,0.21111111) (3.4775605,0.21666667) (3.5274839,0.22222222) (3.5661167,0.22777778) (3.6245528,0.23333333) (3.6718707,0.23888889) (3.7168289,0.24444444) (3.7490247,0.25) (3.7858817,0.25555556) (3.8251257,0.26111111) (3.8688036,0.26666667) (3.9106553,0.27222222) (3.9463738,0.27777778) (3.9828499,0.28333333) (4.0181728,0.28888889) (4.0701421,0.29444444) (4.1122822,0.3) (4.1573555,0.30555556) (4.1938536,0.31111111) (4.2243501,0.31666667) (4.2664862,0.32222222) (4.2978131,0.32777778) (4.328058,0.33333333) (4.3583551,0.33888889) (4.3948607,0.34444444) (4.4270696,0.35) (4.4576916,0.35555556) (4.5019694,0.36111111) (4.5462953,0.36666667) (4.5749392,0.37222222) (4.6267922,0.37777778) (4.6523613,0.38333333) (4.677573,0.38888889) (4.7281662,0.39444444) (4.75664,0.4) (4.7928703,0.40555556) (4.816854,0.41111111) (4.8559211,0.41666667) (4.8960494,0.42222222) (4.9281569,0.42777778) (4.9576943,0.43333333) (4.9934033,0.43888889) (5.0395977,0.44444444) (5.0776735,0.45) (5.1084605,0.45555556) (5.1316732,0.46111111) (5.1735067,0.46666667) (5.1981691,0.47222222) (5.2293707,0.47777778) (5.2835249,0.48333333) (5.3180339,0.48888889) (5.3575878,0.49444444) (5.3844265,0.5) (5.4094453,0.50555556) (5.4586108,0.51111111) (5.4836738,0.51666667) (5.5049854,0.52222222) (5.543227,0.52777778) (5.5798871,0.53333333) (5.6136278,0.53888889) (5.6377115,0.54444444) (5.6674798,0.55) (5.7076413,0.55555556) (5.7533144,0.56111111) (5.7777997,0.56666667) (5.804825,0.57222222) (5.833882,0.57777778) (5.8673018,0.58333333) (5.9012835,0.58888889) (5.9422297,0.59444444) (5.976563,0.6) (6.0134436,0.60555556) (6.0473522,0.61111111) (6.0735798,0.61666667) (6.1061059,0.62222222) (6.141308,0.62777778) (6.1924306,0.63333333) (6.2227641,0.63888889) (6.2622639,0.64444444) (6.3132251,0.65) (6.3464949,0.65555556) (6.3796578,0.66111111) (6.4113727,0.66666667) (6.4440511,0.67222222) (6.4772282,0.67777778) (6.5141808,0.68333333) (6.5510862,0.68888889) (6.5761606,0.69444444) (6.6113639,0.7) (6.6481569,0.70555556) (6.6881599,0.71111111) (6.7292344,0.71666667) (6.782333,0.72222222) (6.8315926,0.72777778) (6.8722104,0.73333333) (6.9257609,0.73888889) (6.9560576,0.74444444) (6.9898489,0.75) (7.0423234,0.75555556) (7.0788354,0.76111111) (7.1216357,0.76666667) (7.1654183,0.77222222) (7.2114844,0.77777778) (7.2527266,0.78333333) (7.2936235,0.78888889) (7.3375795,0.79444444) (7.3752593,0.8) (7.4318625,0.80555556) (7.4991001,0.81111111) (7.5487863,0.81666667) (7.6204778,0.82222222) (7.6898704,0.82777778) (7.7324369,0.83333333) (7.7820767,0.83888889) (7.8438364,0.84444444) (7.9071182,0.85) (7.9843572,0.85555556) (8.0384061,0.86111111) (8.1022,0.86666667) (8.1670115,0.87222222) (8.2393715,0.87777778) (8.3152516,0.88333333) (8.4130673,0.88888889) (8.5148861,0.89444444) (8.606955,0.9) (8.683697,0.90555556) (8.7561159,0.91111111) (8.8250588,0.91666667) (8.8908124,0.92222222) (8.9880479,0.92777778) (9.1213766,0.93333333) (9.2463151,0.93888889) (9.3796848,0.94444444) (9.5520789,0.95) (9.5520789,0.95) (9.6822468,0.95555556) (9.7995224,0.96111111) (10.012361,0.96666667) (10.306026,0.97222222) (10.457673,0.97777778) (10.899466,0.98333333) (11.372764,0.98888889) (11.487949,0.99) (11.91069,0.99444444) (11.94807,0.995) (13.611839,0.999) (17.391694,1)};\addlegendentry{Full-band fixed site 1}
\addplot[papergreen,thick] coordinates {(0.024443695,0) (0.080133047,0.001) (0.13488834,0.005) (0.14275488,0.0055555556) (0.17834564,0.01) (0.1838281,0.011111111) (0.21209064,0.016666667) (0.22759056,0.022222222) (0.2448892,0.027777778) (0.26481054,0.033333333) (0.28312179,0.038888889) (0.29537864,0.044444444) (0.30475365,0.05) (0.31453687,0.055555556) (0.32492326,0.061111111) (0.33371403,0.066666667) (0.34741141,0.072222222) (0.36088856,0.077777778) (0.36827582,0.083333333) (0.37902764,0.088888889) (0.38883773,0.094444444) (0.39432184,0.1) (0.40385252,0.10555556) (0.41166835,0.11111111) (0.41938149,0.11666667) (0.42561117,0.12222222) (0.43131737,0.12777778) (0.43749706,0.13333333) (0.44375119,0.13888889) (0.4506946,0.14444444) (0.45750031,0.15) (0.4642527,0.15555556) (0.46957625,0.16111111) (0.47585462,0.16666667) (0.47952271,0.17222222) (0.48591062,0.17777778) (0.49211296,0.18333333) (0.49753443,0.18888889) (0.50366136,0.19444444) (0.50907831,0.2) (0.513466,0.20555556) (0.51808906,0.21111111) (0.52495897,0.21666667) (0.53019036,0.22222222) (0.53657557,0.22777778) (0.5432735,0.23333333) (0.54824999,0.23888889) (0.55221054,0.24444444) (0.55838666,0.25) (0.56410701,0.25555556) (0.56840396,0.26111111) (0.57338053,0.26666667) (0.57806063,0.27222222) (0.58187303,0.27777778) (0.5871737,0.28333333) (0.5895303,0.28888889) (0.59366822,0.29444444) (0.59773814,0.3) (0.60079869,0.30555556) (0.60403674,0.31111111) (0.60886305,0.31666667) (0.61489055,0.32222222) (0.62042787,0.32777778) (0.62491666,0.33333333) (0.63139408,0.33888889) (0.63562586,0.34444444) (0.64059226,0.35) (0.64459914,0.35555556) (0.64916067,0.36111111) (0.65296398,0.36666667) (0.66004916,0.37222222) (0.66507733,0.37777778) (0.66996295,0.38333333) (0.67498005,0.38888889) (0.67786285,0.39444444) (0.68438691,0.4) (0.69202682,0.40555556) (0.69626903,0.41111111) (0.70131957,0.41666667) (0.7085058,0.42222222) (0.71272437,0.42777778) (0.72017867,0.43333333) (0.72478595,0.43888889) (0.73051582,0.44444444) (0.73594079,0.45) (0.74202969,0.45555556) (0.74856468,0.46111111) (0.75453568,0.46666667) (0.76162798,0.47222222) (0.77012744,0.47777778) (0.77521345,0.48333333) (0.78093376,0.48888889) (0.7874036,0.49444444) (0.79491948,0.5) (0.80136781,0.50555556) (0.80854328,0.51111111) (0.81453699,0.51666667) (0.82004753,0.52222222) (0.82682437,0.52777778) (0.83328077,0.53333333) (0.84191571,0.53888889) (0.85078418,0.54444444) (0.8605704,0.55) (0.86901835,0.55555556) (0.87882468,0.56111111) (0.88375482,0.56666667) (0.89024617,0.57222222) (0.89963918,0.57777778) (0.90818528,0.58333333) (0.91877816,0.58888889) (0.93206614,0.59444444) (0.93863782,0.6) (0.94953389,0.60555556) (0.95717495,0.61111111) (0.96821832,0.61666667) (0.97811487,0.62222222) (0.98705136,0.62777778) (1.0014388,0.63333333) (1.0088171,0.63888889) (1.0172157,0.64444444) (1.025566,0.65) (1.0379209,0.65555556) (1.0553058,0.66111111) (1.07073,0.66666667) (1.0868872,0.67222222) (1.1066614,0.67777778) (1.1247369,0.68333333) (1.1374343,0.68888889) (1.1559473,0.69444444) (1.1728909,0.7) (1.1916995,0.70555556) (1.2112735,0.71111111) (1.2352113,0.71666667) (1.2588862,0.72222222) (1.2716822,0.72777778) (1.2951971,0.73333333) (1.319081,0.73888889) (1.3481456,0.74444444) (1.3667231,0.75) (1.3891482,0.75555556) (1.415093,0.76111111) (1.4386982,0.76666667) (1.4678659,0.77222222) (1.4890863,0.77777778) (1.5239824,0.78333333) (1.5626075,0.78888889) (1.5905897,0.79444444) (1.6259578,0.8) (1.6612243,0.80555556) (1.6887652,0.81111111) (1.7196614,0.81666667) (1.7725142,0.82222222) (1.8134501,0.82777778) (1.8605034,0.83333333) (1.8985281,0.83888889) (1.9391474,0.84444444) (1.9797322,0.85) (2.0258187,0.85555556) (2.0778689,0.86111111) (2.145399,0.86666667) (2.2154891,0.87222222) (2.277933,0.87777778) (2.3580165,0.88333333) (2.412002,0.88888889) (2.4620711,0.89444444) (2.521738,0.9) (2.6163846,0.90555556) (2.715154,0.91111111) (2.7878264,0.91666667) (2.8772457,0.92222222) (2.9673965,0.92777778) (3.1100748,0.93333333) (3.2465374,0.93888889) (3.4279146,0.94444444) (3.5961142,0.95) (3.5961142,0.95) (3.7501479,0.95555556) (3.9108986,0.96111111) (4.1192609,0.96666667) (4.3961851,0.97222222) (4.76245,0.97777778) (5.1252283,0.98333333) (5.75496,0.98888889) (5.864069,0.99) (6.671653,0.99444444) (6.7730185,0.995) (8.5061691,0.999) (10.597571,1)};\addlegendentry{Matched-budget SAR}
\addplot[gray,dashed] coordinates {(0,.9)(12,.9)};\end{axis}\end{tikzpicture}\par\smallskip{\footnotesize (a) 3GPP stochastic environment: 4,771 common truths}\end{minipage}
\hfill
\begin{minipage}[b]{.48\textwidth}\centering\begin{tikzpicture}\begin{axis}[paperaxis,width=\linewidth,height=4.0cm,xlabel={Common-truth position error (m)},ylabel={CDF},xmin=0,xmax=12,ymin=0,ymax=1,legend style={font=\footnotesize,at={(.98,.03)},anchor=south east}]
\addplot[paperblack,thick] coordinates {(0.20572484,0) (0.3965184,0.001) (0.60367395,0.005) (0.62533063,0.0055555556) (0.82292329,0.01) (0.84812214,0.011111111) (0.97376422,0.016666667) (1.1412067,0.022222222) (1.2368613,0.027777778) (1.3449593,0.033333333) (1.4560997,0.038888889) (1.5202056,0.044444444) (1.6177637,0.05) (1.6892617,0.055555556) (1.7815428,0.061111111) (1.8476776,0.066666667) (1.8970001,0.072222222) (1.9583451,0.077777778) (2.0178118,0.083333333) (2.0944087,0.088888889) (2.1827345,0.094444444) (2.2541073,0.1) (2.3111491,0.10555556) (2.3628616,0.11111111) (2.4310483,0.11666667) (2.4856737,0.12222222) (2.5326461,0.12777778) (2.5900051,0.13333333) (2.6556308,0.13888889) (2.7259718,0.14444444) (2.7784014,0.15) (2.8218746,0.15555556) (2.8693628,0.16111111) (2.9329771,0.16666667) (2.976854,0.17222222) (3.011859,0.17777778) (3.0727288,0.18333333) (3.1138469,0.18888889) (3.1520982,0.19444444) (3.2068439,0.2) (3.2551614,0.20555556) (3.2869761,0.21111111) (3.3190955,0.21666667) (3.3584503,0.22222222) (3.4113046,0.22777778) (3.4457828,0.23333333) (3.4868381,0.23888889) (3.5239489,0.24444444) (3.5657442,0.25) (3.6157258,0.25555556) (3.6741302,0.26111111) (3.7153767,0.26666667) (3.7456344,0.27222222) (3.7855089,0.27777778) (3.8246486,0.28333333) (3.878594,0.28888889) (3.9106966,0.29444444) (3.9423426,0.3) (3.9901026,0.30555556) (4.0214638,0.31111111) (4.0545775,0.31666667) (4.1221544,0.32222222) (4.1626985,0.32777778) (4.2051254,0.33333333) (4.2421457,0.33888889) (4.2715804,0.34444444) (4.3057911,0.35) (4.336707,0.35555556) (4.3713179,0.36111111) (4.4016385,0.36666667) (4.4536792,0.37222222) (4.4765154,0.37777778) (4.5159065,0.38333333) (4.5518975,0.38888889) (4.5841968,0.39444444) (4.6174332,0.4) (4.647149,0.40555556) (4.6778652,0.41111111) (4.7304215,0.41666667) (4.7598592,0.42222222) (4.7958397,0.42777778) (4.8300239,0.43333333) (4.8602235,0.43888889) (4.8924124,0.44444444) (4.9379447,0.45) (4.9654861,0.45555556) (5.0013761,0.46111111) (5.0313259,0.46666667) (5.0687789,0.47222222) (5.1067581,0.47777778) (5.1432355,0.48333333) (5.1755049,0.48888889) (5.2196266,0.49444444) (5.2605424,0.5) (5.2929459,0.50555556) (5.3247266,0.51111111) (5.3515099,0.51666667) (5.3982715,0.52222222) (5.4264214,0.52777778) (5.4649989,0.53333333) (5.4950375,0.53888889) (5.522971,0.54444444) (5.5542915,0.55) (5.5928663,0.55555556) (5.6362032,0.56111111) (5.6690279,0.56666667) (5.7039241,0.57222222) (5.7393389,0.57777778) (5.7689281,0.58333333) (5.8066108,0.58888889) (5.8460565,0.59444444) (5.8824246,0.6) (5.9205097,0.60555556) (5.9613281,0.61111111) (5.9966938,0.61666667) (6.0185292,0.62222222) (6.0545589,0.62777778) (6.0845074,0.63333333) (6.131573,0.63888889) (6.1605529,0.64444444) (6.2013763,0.65) (6.2387512,0.65555556) (6.2739375,0.66111111) (6.2965566,0.66666667) (6.3308501,0.67222222) (6.3647687,0.67777778) (6.393173,0.68333333) (6.4304039,0.68888889) (6.4702668,0.69444444) (6.5234759,0.7) (6.5701659,0.70555556) (6.6102114,0.71111111) (6.6477942,0.71666667) (6.6967982,0.72222222) (6.7483597,0.72777778) (6.8157147,0.73333333) (6.8610676,0.73888889) (6.9053778,0.74444444) (6.9511119,0.75) (6.9952011,0.75555556) (7.0432035,0.76111111) (7.1029623,0.76666667) (7.1484286,0.77222222) (7.1940002,0.77777778) (7.2374933,0.78333333) (7.2900089,0.78888889) (7.3599466,0.79444444) (7.4201673,0.8) (7.4625358,0.80555556) (7.5442968,0.81111111) (7.6019268,0.81666667) (7.6687821,0.82222222) (7.725882,0.82777778) (7.7648631,0.83333333) (7.8299518,0.83888889) (7.9048831,0.84444444) (7.9606626,0.85) (8.0191882,0.85555556) (8.0963936,0.86111111) (8.1630829,0.86666667) (8.225605,0.87222222) (8.28644,0.87777778) (8.3610527,0.88333333) (8.4160121,0.88888889) (8.4758098,0.89444444) (8.5359967,0.9) (8.6146352,0.90555556) (8.6955791,0.91111111) (8.7860977,0.91666667) (8.8760576,0.92222222) (8.9976627,0.92777778) (9.1683796,0.93333333) (9.2741285,0.93888889) (9.452281,0.94444444) (9.6181954,0.95) (9.6181954,0.95) (9.7461742,0.95555556) (9.9386828,0.96111111) (10.077869,0.96666667) (10.328256,0.97222222) (10.632602,0.97777778) (11.013453,0.98333333) (11.56113,0.98888889) (11.624242,0.99) (12.437838,0.99444444) (12.611472,0.995) (15.796551,0.999) (24.479065,1)};\addlegendentry{Full-band fixed site 1}
\addplot[papergreen,thick] coordinates {(0.026578655,0) (0.10529007,0.001) (0.15520458,0.005) (0.15890937,0.0055555556) (0.19370543,0.01) (0.19993172,0.011111111) (0.23038525,0.016666667) (0.24698134,0.022222222) (0.2647642,0.027777778) (0.28654538,0.033333333) (0.30344663,0.038888889) (0.32005988,0.044444444) (0.33755191,0.05) (0.34854242,0.055555556) (0.35775281,0.061111111) (0.37929493,0.066666667) (0.39474964,0.072222222) (0.40292954,0.077777778) (0.4157799,0.083333333) (0.42442644,0.088888889) (0.43459336,0.094444444) (0.44218337,0.1) (0.44902147,0.10555556) (0.45800612,0.11111111) (0.46354388,0.11666667) (0.47139559,0.12222222) (0.48173923,0.12777778) (0.48974701,0.13333333) (0.49809387,0.13888889) (0.50512457,0.14444444) (0.51149171,0.15) (0.52035072,0.15555556) (0.52692466,0.16111111) (0.53315077,0.16666667) (0.53990216,0.17222222) (0.5459667,0.17777778) (0.55229969,0.18333333) (0.55859339,0.18888889) (0.56724196,0.19444444) (0.57442723,0.2) (0.57811295,0.20555556) (0.58885035,0.21111111) (0.59263272,0.21666667) (0.60087716,0.22222222) (0.60994818,0.22777778) (0.61685458,0.23333333) (0.62350316,0.23888889) (0.629401,0.24444444) (0.63946982,0.25) (0.64547472,0.25555556) (0.65202486,0.26111111) (0.65784961,0.26666667) (0.66424793,0.27222222) (0.67318874,0.27777778) (0.67681164,0.28333333) (0.68183459,0.28888889) (0.68659318,0.29444444) (0.6945158,0.3) (0.70254865,0.30555556) (0.70896702,0.31111111) (0.71938458,0.31666667) (0.72595487,0.32222222) (0.73279722,0.32777778) (0.73947043,0.33333333) (0.74670236,0.33888889) (0.75276265,0.34444444) (0.75878212,0.35) (0.76516713,0.35555556) (0.77152459,0.36111111) (0.77660811,0.36666667) (0.78540448,0.37222222) (0.79540648,0.37777778) (0.8047922,0.38333333) (0.8138905,0.38888889) (0.81963149,0.39444444) (0.82773053,0.4) (0.83597516,0.40555556) (0.84596362,0.41111111) (0.85430563,0.41666667) (0.86564987,0.42222222) (0.87261464,0.42777778) (0.88386271,0.43333333) (0.8916062,0.43888889) (0.90333154,0.44444444) (0.91191278,0.45) (0.92021707,0.45555556) (0.93302185,0.46111111) (0.9460074,0.46666667) (0.95713929,0.47222222) (0.97080361,0.47777778) (0.97934175,0.48333333) (0.99015531,0.48888889) (1.0030439,0.49444444) (1.0148818,0.5) (1.0259892,0.50555556) (1.0377929,0.51111111) (1.0546234,0.51666667) (1.0736655,0.52222222) (1.0857978,0.52777778) (1.0987232,0.53333333) (1.114216,0.53888889) (1.128834,0.54444444) (1.1464093,0.55) (1.1586906,0.55555556) (1.1701934,0.56111111) (1.1838613,0.56666667) (1.1981599,0.57222222) (1.2139947,0.57777778) (1.2294663,0.58333333) (1.2457752,0.58888889) (1.2655725,0.59444444) (1.2797293,0.6) (1.2919623,0.60555556) (1.3150298,0.61111111) (1.3322578,0.61666667) (1.3532779,0.62222222) (1.3750011,0.62777778) (1.3938818,0.63333333) (1.4069148,0.63888889) (1.4298401,0.64444444) (1.4492015,0.65) (1.4685066,0.65555556) (1.4926604,0.66111111) (1.5188274,0.66666667) (1.5406015,0.67222222) (1.561423,0.67777778) (1.5946296,0.68333333) (1.6234796,0.68888889) (1.6452987,0.69444444) (1.6703731,0.7) (1.6902574,0.70555556) (1.7260992,0.71111111) (1.7706874,0.71666667) (1.7982205,0.72222222) (1.8243803,0.72777778) (1.8604197,0.73333333) (1.8952867,0.73888889) (1.9377668,0.74444444) (1.9743437,0.75) (2.014986,0.75555556) (2.0490571,0.76111111) (2.0880506,0.76666667) (2.1248085,0.77222222) (2.1585763,0.77777778) (2.2019218,0.78333333) (2.2331265,0.78888889) (2.280682,0.79444444) (2.3312971,0.8) (2.385294,0.80555556) (2.4390361,0.81111111) (2.4895234,0.81666667) (2.5440985,0.82222222) (2.5933696,0.82777778) (2.6446368,0.83333333) (2.7029967,0.83888889) (2.7585172,0.84444444) (2.8022365,0.85) (2.8569835,0.85555556) (2.9457487,0.86111111) (3.0184445,0.86666667) (3.0947438,0.87222222) (3.172104,0.87777778) (3.2398838,0.88333333) (3.3397761,0.88888889) (3.4395272,0.89444444) (3.4968348,0.9) (3.5962917,0.90555556) (3.6617028,0.91111111) (3.7866421,0.91666667) (3.897718,0.92222222) (3.9862623,0.92777778) (4.095963,0.93333333) (4.2465101,0.93888889) (4.3512713,0.94444444) (4.4966067,0.95) (4.4966067,0.95) (4.6918728,0.95555556) (4.9340098,0.96111111) (5.1930558,0.96666667) (5.5320345,0.97222222) (5.8228556,0.97777778) (6.1776378,0.98333333) (6.6067013,0.98888889) (6.6916166,0.99) (7.4516736,0.99444444) (7.6396845,0.995) (13.466863,0.999) (27.437303,1)};\addlegendentry{Matched-budget SAR}
\addplot[gray,dashed] coordinates {(0,.9)(12,.9)};\end{axis}\end{tikzpicture}\par\smallskip{\footnotesize (b) Mapped city with ray tracing: 4,511 common truths}\end{minipage}
\caption{Accuracy on identical target IDs detected by both methods. Full-band site~1 is the strongest fixed site by detection in each population. Its transmitted PRS budget equals aggregate matched cooperation, which uses three receive apertures. Position P90 is 8.607/2.522 m in (a) and 8.536/3.497 m in (b). These gains survive stronger baselines and population control; they are not attributed to staging alone.}
\label{fig:common_accuracy}\end{figure*}
\subsection{Geometry, Bias, and Correlation Controls}
Figure~\ref{fig:controlled} first tests the theory independently of target
association. A target at $(0,0,120)$ m has velocity $(25,-15,8)$ m/s.
Simultaneous monostatic nodes lie on a 500 m horizontal circle at 25 m height.
Varying azimuth span and node count from two to twelve verifies
\eqref{eq:rank}: two views have deficient 3D velocity rank, while narrow-angle
rank-six geometry can have poor precision. At a $15^\circ$ span,
the RMS unconstrained velocity CRB remains 78.9 m/s even with twelve nodes;
three full-circle nodes give 0.54 m/s. Two views at epochs separated by 50 ms
have rank six through displacement information.

The risk fixture linearizes two additional monostatic views at
$(500,0,25)$ and $(0,500,25)$ m using \eqref{eq:jacobian} and monostatic scaling.
It uses $\mathbf P=\operatorname{diag}(4,2,0.5,0.16,0.49,0.25)$,
$\mathbf W=\operatorname{diag}(1/25,1/25,1/25,4,4,4)$, and
$\boldsymbol\epsilon=0.25\mathbf H\mathbf e_S+\boldsymbol\epsilon_0$.
Independent $\boldsymbol\epsilon_0$ has the range/angle/radial standard
deviations stated in Fig.~\ref{fig:controlled}(a). At design radius two,
worst normalized MSE is 8.531 for nominal addition, 3.860 for locking,
3.227 for optimized scalar shrinkage, and 2.997 for directional refinement.
Directional refinement improves worst risk by 22.3\% over locking and 7.1\%
over scalar shrinkage; 200,000 noise realizations verify the curves. Varying shared-error loading
to $\boldsymbol\epsilon=\zeta\mathbf H\mathbf e_S+\boldsymbol\epsilon_0$
demonstrates the role of $\mathbf C$: at $\zeta=1$, innovations contain no
baseline-error information, whereas ignoring correlation still admits
an erroneous update.

\subsection{Local Rejection and Cooperation in the 3GPP Environment}
Figure~\ref{fig:ablation} evaluates local evidence. At full-band
TRP0, the seven checks reduce mean unmatched outputs from 30.671 to 0.303
per occupied drop and FA2 from 84.82\% to 5.16\%; detection decreases
97.52\% to 94.41\%. Temporal, spatial, TX-signature, and delayed-companion
checks provide the largest interaction-adjusted reductions. Split-speed and
split-power checks have little marginal contribution at this setting.

Table~\ref{tab:primary_results} and Fig.~\ref{fig:selection_tradeoff}
separate selection, corroboration, and recovery. Observable selection detects
5,014 truths with 32 unmatched outputs. SC detects 4,926 with two unmatched
outputs. SAR adds 30 matches and three unmatched outputs; PC has the same
4,956 matches but 77 unmatched outputs. Strong anchoring suppresses most
all-weak false groups, while corroboration incurs a coverage cost relative
to selection. Full-band fixed sites 0/1/2 detect 94.41/95.25/95.23\%.
Against the strongest fixed site, matched SAR increases detection by 1.543
percentage points, paired interval [0.918,2.168]. On 4,771 shared truths,
position P90 falls from 8.607 to 2.522 m. Figure~\ref{fig:common_accuracy}(a)
shows the corresponding cumulative distribution functions (CDFs).
Multiple receive apertures and geometric estimation improve precision
despite splitting transmitted resources. Independent 128-drop occupied
validation gives SC 615/0 and AC/SAR 619/0 matched/unmatched outputs.

\begin{figure*}[t]\centering
\begin{minipage}[b]{.48\textwidth}\centering\begin{tikzpicture}\begin{axis}[paperaxis,width=\linewidth,height=4.0cm,xlabel={Radius (m)},xmin=7.5,xmax=30,xtick={7.5,15,30},ylabel={GOSPA RMS (m)},ymin=8,ymax=26,legend style={font=\footnotesize,at={(.98,.98)},anchor=north east}]
\addplot[paperblue,mark=square*,thick,error bars/.cd,y dir=both,y explicit] coordinates {(7.5,20.452685) +=(0,0.56346685) -=(0,0.54594965) (15,9.8733245) +=(0,0.55010191) -=(0,0.58162049) (30,9.3452898) +=(0,0.56482642) -=(0,0.59869571)};\addlegendentry{Stochastic}
\addplot[papergreen,mark=*,thick,error bars/.cd,y dir=both,y explicit] coordinates {(7.5,24.458531) +=(0,0.51714246) -=(0,0.49869616) (15,14.403468) +=(0,0.57194034) -=(0,0.58106044) (30,13.9231) +=(0,0.5698842) -=(0,0.58164541)};\addlegendentry{City}
\end{axis}\end{tikzpicture}\par\smallskip{\footnotesize (a) Euclidean association radius}\end{minipage}
\hfill
\begin{minipage}[b]{.48\textwidth}\centering\begin{tikzpicture}\begin{axis}[paperaxis,width=\linewidth,height=4.0cm,xlabel={Multiplier},xmin=.5,xmax=2,xtick={.5,1,2},ylabel={GOSPA RMS (m)},ymin=8,ymax=26,legend style={font=\footnotesize,at={(.98,.98)},anchor=north east}]
\addplot[paperblue,mark=square*,thick,error bars/.cd,y dir=both,y explicit] coordinates {(0.5,10.224828) +=(0,0.57533237) -=(0,0.60408911) (1,9.3970816) +=(0,0.55182139) -=(0,0.59504613) (2,9.351935) +=(0,0.57169895) -=(0,0.59254988)};\addlegendentry{Stochastic}
\addplot[papergreen,mark=*,thick,error bars/.cd,y dir=both,y explicit] coordinates {(0.5,15.362703) +=(0,0.59907428) -=(0,0.56822244) (1,14.222909) +=(0,0.55741557) -=(0,0.57516789) (2,14.252819) +=(0,0.55651326) -=(0,0.57551852)};\addlegendentry{City}
\end{axis}\end{tikzpicture}\par\smallskip{\footnotesize (b) Covariance threshold multiplier}\end{minipage}
\caption{Association sensitivity with identical local measurements and SAR processing, 1,024 occupied/1,024 empty drops per environment. Error bars are 95\% whole-drop intervals. (a) The 7.5/15/30 m extension is exploratory; all radii are shown. The 30 m gate is a strong simple comparator. (b) Scaling the nominal covariance gate reveals a broad plateau and the loss caused by over-tight screening. Neither curve optimizes the joint receiver threshold.}
\label{fig:gates}\end{figure*}

\begin{table*}[t]\centering
\caption{Factor controls on identical matched-budget reports with SAR. Own-matched P90 and set error are different views. Euclidean radii are shown in Fig.~\ref{fig:gates}; 30 m is exploratory on the main populations and fixed on independent validation.}
\label{tab:controls}\setlength{\tabcolsep}{4pt}
\begin{tabular}{lrrrrrrrr}\toprule
&\multicolumn{4}{c}{3GPP stochastic environment}&\multicolumn{4}{c}{Mapped city with ray tracing}\\
Control&Matches/false&Pos. P90&Vel. P90&GOSPA&Matches/false&Pos. P90&Vel. P90&GOSPA\\
& &m&m/s&m& &m&m/s&m\\\midrule
Covariance + Huber & 4,956/5 & 2.630 & 0.301 & 9.397 & 4,734/22 & 3.531 & 0.320 & 14.223\\
Covariance + LS & 4,956/5 & 2.604 & 0.300 & 9.390 & 4,734/22 & 3.511 & 0.317 & 14.210\\
Euclidean 15 m + Huber & 4,930/1 & 2.713 & 0.335 & 9.873 & 4,699/2 & 3.558 & 0.339 & 14.403\\
Euclidean 30 m + Huber & 4,955/2 & 2.619 & 0.300 & 9.345 & 4,736/5 & 3.524 & 0.320 & 13.923\\
\bottomrule\end{tabular}\end{table*}

\begin{figure*}[t]\centering
\begin{minipage}[b]{.48\textwidth}\centering\begin{tikzpicture}\begin{axis}[paperaxis,width=\linewidth,height=4.0cm,width=.95\linewidth,yticklabel style={/pgf/number format/fixed,/pgf/number format/precision=2},xlabel={Design bias radius $\rho$},ylabel={Position MSE change (m$^2$)},xmin=0,xmax=8,xtick={0,1,2,4,8},legend style={font=\footnotesize,at={(.98,.02)},anchor=south east}]
\addplot[paperblue,mark=square*,thick,error bars/.cd,y dir=both,y explicit] coordinates {(0.0,-0.047700714) +=(0,0.031401768) -=(0,0.034839606) (0.5,-0.048349906) +=(0,0.028990213) -=(0,0.033821262) (1.0,-0.039814027) +=(0,0.022673339) -=(0,0.02615529) (2.0,-0.021920943) +=(0,0.01194115) -=(0,0.014188418) (4.0,-0.0091536826) +=(0,0.0046161846) -=(0,0.0054167832) (8.0,-0.0030447294) +=(0,0.0013895001) -=(0,0.0016385083)};\addlegendentry{Stochastic}
\addplot[papergreen,mark=*,thick,error bars/.cd,y dir=both,y explicit] coordinates {(0.0,-0.0083216562) +=(0,0.059000712) -=(0,0.057413579) (0.5,-0.043570779) +=(0,0.047616833) -=(0,0.047647326) (1.0,-0.061032128) +=(0,0.035751658) -=(0,0.035784979) (2.0,-0.051669884) +=(0,0.023689712) -=(0,0.023034927) (4.0,-0.026768389) +=(0,0.012636756) -=(0,0.012190829) (8.0,-0.011282336) +=(0,0.0051906944) -=(0,0.0053781378)};\addlegendentry{City}
\addplot[paperblack,dotted,forget plot] coordinates {(0,0)(8,0)};
\end{axis}\end{tikzpicture}\par\smallskip{\footnotesize (a) Position}\end{minipage}
\hfill
\begin{minipage}[b]{.48\textwidth}\centering\begin{tikzpicture}\begin{axis}[paperaxis,width=\linewidth,height=4.0cm,width=.95\linewidth,yticklabel style={/pgf/number format/fixed,/pgf/number format/precision=2},xlabel={Design bias radius $\rho$},ylabel={Velocity MSE change ((m/s)$^2$)},xmin=0,xmax=8,xtick={0,1,2,4,8},legend style={font=\footnotesize,at={(.98,.02)},anchor=south east}]
\addplot[paperblue,mark=square*,thick,error bars/.cd,y dir=both,y explicit] coordinates {(0.0,-0.016891625) +=(0,0.013779046) -=(0,0.018400018) (0.5,-0.016307555) +=(0,0.013243739) -=(0,0.01756862) (1.0,-0.012832916) +=(0,0.010359506) -=(0,0.013745605) (2.0,-0.006229223) +=(0,0.0049748249) -=(0,0.006622317) (4.0,-0.0020010539) +=(0,0.0015805998) -=(0,0.0021016167) (8.0,-0.00053913512) +=(0,0.00042264846) -=(0,0.00056182559)};\addlegendentry{Stochastic}
\addplot[papergreen,mark=*,thick,error bars/.cd,y dir=both,y explicit] coordinates {(0.0,-0.053882104) +=(0,0.020105658) -=(0,0.024281356) (0.5,-0.051784738) +=(0,0.019432368) -=(0,0.023444092) (1.0,-0.040525382) +=(0,0.015207387) -=(0,0.018350378) (2.0,-0.019583572) +=(0,0.0073060394) -=(0,0.0088033894) (4.0,-0.0063104989) +=(0,0.0023202713) -=(0,0.0028137202) (8.0,-0.0017319079) +=(0,0.00062667161) -=(0,0.00075739157)};\addlegendentry{City}
\addplot[paperblack,dotted,forget plot] coordinates {(0,0)(8,0)};
\end{axis}\end{tikzpicture}\par\smallskip{\footnotesize (b) Horizontal vector velocity}\end{minipage}
\caption{Protected refinement on the actual matched-budget Euclidean-30 m report pools. Changes are relative to SAR locking, on identical matched truth IDs; negative favors refinement. Only 63 stochastic and 268 city strict states receive additional weak views. Bars are paired 95\% whole-drop intervals. Smaller $\rho$ permits stronger refits; larger $\rho$ approaches locking. }
\label{fig:refinement}\end{figure*}

\subsection{Target Returns and Sensing in the Mapped City}
The city has native target paths at all three sites for 4,936 of 5,120 UAVs,
but locally matched strong reports at all three for only 2,856. Traced
interactions need not yield useful measurements. Local
rejection reduces full-band TRP0 false count 8.742 to 1.904 per drop,
with detection decreasing 84.20\% to 80.59\%. Spatial, temporal, and TX
consistency have the largest false-count contributions, differing from the
stochastic ranking in Fig.~\ref{fig:ablation}.

Matched selection gives 88.75\% detection and 11.06\% FA2; SAR gives
92.46\% and 0.396\%. Relative to SC, weak recovery adds 169 matches and
15 unmatched outputs: 8.15\% of the 184 extra confirmations are unmatched.
Full-band fixed-site detection
is 80.59/91.84/90.55\%. SAR's 0.625-point advantage over site~1 has interval
[$-0.176$,1.445], so positive detection improvement is unresolved. Common-truth
position P90 nevertheless falls from 8.536 to 3.497 m on 4,511 targets
(Fig.~\ref{fig:common_accuracy}(b)).
Cleaner, more accurate confirmation is more consistent than a coverage gain.

In a 64-drop/320-target path control, removing stationary background
leaves every scored output unchanged. Retaining only direct target paths
changes SC 281/0 to 294/0 and SAR 290/1 to 295/0. Target-interacting
reflections account for the remaining ambiguity in that subset, consistent
with the mechanism in Corollary~\ref{cor:image_floor}. All coded fusion
methods have zero observed empty alarms.

\subsection{Association Sensitivity and Estimator Comparisons}
Figure~\ref{fig:gates} and Table~\ref{tab:controls} isolate processing choices.
LS preserves Huber's counts and matched truth IDs
and slightly improves GOSPA; Huber adds no demonstrated benefit after hard
joint-residual gates. Halving the covariance pair threshold raises GOSPA by 0.828/1.140 m in
the stochastic/city cases; both paired intervals exclude zero. Doubling it
changes only two net matches per environment. Over-tight screening discards
useful views before fitting.

The initially specified Euclidean 15 m gate removes ghosts but loses targets. An exploratory
7.5/15/30 m sweep shows that 30 m gives 4,955/2 and 4,736/5 matched/unmatched
outputs, competitive detection with fewer false outputs than nominal
covariance gates. City GOSPA improves by 0.300 m, interval [0.174,0.444].
Transferring the fixed 30 m gate unchanged to independent 128-drop stochastic
validation preserves the nominal groups and metrics. Covariance weighting
without path bias therefore does not establish superior association.

\subsection{Bias-Aware Refinement and Common-Target Accuracy}
Refinement uses 5 m/0.5 m/s normalized loss, $\mathbf C=0$, linearized
weight proxies, and radii $0,0.5,1,2,4,8,\infty$.
AC weak augmentations preserve SC membership and exclude weak IDs used by
recovery. Eligible pools are accessible; the recovery traffic savings
below exclude additional refinement acquisition.

Figure~\ref{fig:refinement} uses the strong Euclidean-30 m baseline, with
63 stochastic and 268 city refinable states. At radius two, city common-truth
position and velocity MSE changes are $-0.0517$ m$^2$ [${-0.0747},{-0.0280}$]
and $-0.0196$ (m/s)$^2$ [${-0.0284},{-0.0123}$], relative to locking.
Both intervals exclude zero; matched and false counts remain unchanged.
The covariance-gated baseline likewise improves by $-0.0486$ m$^2$ and
$-0.0202$ (m/s)$^2$. Nominal addition gives the strongest velocity gain but
no resolved city position gain. Finite radius reduces position exposure
while retaining velocity benefit; infinite radius reproduces SAR exactly
on all 4,096 original drops.

Scalar shrinkage gives slightly lower city position MSE at radius two.
Corollary~\ref{cor:cap} concerns worst weighted risk under specified
bias/moments, as the fixture verifies; waveform reports show the practical
tradeoff.

\subsection{Reporting Payload and Request Overhead}
Locked covariance-gated SAR uploads 13,809/13,434 stochastic/city strong
records. Weak uploads fall from 1,367/1,558 to 35/213, requiring
644/2,966 unicast cue copies. Equation~\eqref{eq:traffic} gives cue/report
size breakpoints 2.068/0.4535. Equal-size records reduce stochastic traffic
4.53\% but increase city traffic 10.81\%; quarter-size city cues save 4.03\%,
excluding framing and acknowledgments. Compact requests determine net
savings; this payload differs from the 5.3375\% radiated PRS overhead.

\section{Conclusion}
Cooperative ISAC depends on diverse views and disciplined use of their evidence. This paper connects report distinguishability, shared-image ambiguity, correlated-innovation risk, and acquisition cost in a general fusion framework. Strong reports establish associations, additional views refine confirmed states under a declared bias model, and unused anchors recover further targets. UAV evaluations under a 3GPP stochastic channel model and ray tracing in a San Francisco urban scene demonstrate the benefits of cooperative fusion. In the latter environment, fusion achieves 92.46\% detection with a 0.40\% false-output fraction and reduces common-target 90th-percentile position error from 8.54 to 3.50 m. The results support separating association validity from refinement utility when designing sensing reports and network processing. Future work will incorporate map-informed bias sets and bistatic reference errors into adaptive fusion and acquisition.

\end{document}